\newtheorem{remark}{Remark}
\newtheorem{theorem}{Theorem}
\newtheorem{lemma}{Lemma}[section]
\newtheorem{assumpA}{Assumption}
\DeclareMathOperator{\sign}{sign}
\title{Stable combination tests}
\author{Xing Ling}
\author{Yeonwoo Rho\thanks{%Emails: X. Ling (xling@mtu.edu) and Y. Rho (yrho@mtu.edu) 
Corresponding author: Yeonwoo Rho, School of Mathematical Science, Michigan Technological University, Houghton, MI, 49931, USA. E-mail: yrho@mtu.edu. }}
\affil{Michigan Technological University}
\date{\today}
\begin{document}

\maketitle

\begin{abstract}
    This paper proposes a stable combination test, which is a natural extension of Cauchy combination tests by \cite{liu2020cauchy}. Similarly to the Cauchy combination test, our stable combination test is simple to compute, enjoys good sizes, and has asymptotically optimal powers even when the individual tests are not independent. This finding is supported both in theory and in finite samples. %In particular, the stable transformed p-values also follow a stable distribution asymptotically under the global null hypothesis. 
    \\
		\textit{key words and phrases}: Additive combination test; multiple hypothesis testing; stable distribution.
\end{abstract}

\section{Introduction}\label{section: introduction}

Multiplicity has been a long-standing issue in areas that often require testing a large number of hypotheses at the same time. Achieving higher power, while controlling the size, has been one of the most important missions in the multiplicity area. Dependency among the underlying individual tests is another key element to consider. For instance, the Bonferroni method can control the size even when the underlying tests are dependent. However, this method is so conservative that it has been criticized for its low power especially when the underlying tests are strongly positively correlated. See \cite{o1984procedures, moran2003arguments, dmitrienko2009multiple}, among others.

This paper focuses on additive p-value combination tests when the underlying p-values are allowed to be dependent. Additive combination tests utilize the fact that a p-value under a null hypothesis is uniformly distributed.
Individual p-values are transformed into new random variables, which are then linearly combined to form a combined test statistic. 
One of the key steps associated with a combined test statistic is figuring out its null distribution. This step is simpler if the transformation function is chosen in a way that the distribution of the transformed p-values under the null is closed under addition. For example, Fisher's combination test statistic is $-2\sum_{i=1}^n\log p_i$, where $p_i$s indicate the individual p-values. If the $i$th null hypothesis is true, $p_i$ would more or less follow a uniform distribution, and $-2\log p_i$ follows a gamma distribution with shape parameter 1 and scale parameter 2. If tests are independent, Fisher's statistic follows a gamma distribution with shape parameter $n$ and scale parameter 2. However, if tests are dependent, finding out the right shape and scale parameters is not easy. \cite{liu2020cauchy} found that a Cauchy distribution is a good alternative of the gamma distribution. Similarly to gamma random variables, a convex combination of standard Cauchy random variables is also standard Cauchy.
What makes a Cauchy distribution an attractive candidate for a combination test is that this relationship holds even when the random variables are dependent \citep{pillai2016unexpected}.
The Cauchy combination test (CCT) by \cite{liu2020cauchy} was born of this observation. The individual p-values are transformed to standard Cauchy random variables, and a convex combination of these Cauchy random variables is used as a combined test statistic. The critical values are taken from a standard Cauchy distribution. \cite{liu2020cauchy} proved that the CCT controls the size  well if the significance level is small and has asymptotically optimal powers under sparse alternatives. The CCT is fast to compute and robust to various forms of dependence structures.

Our paper is motivated by the fact that there is a wide class of distributions that is closed under addition -- strictly stable distributions. In fact, the Cauchy distribution is also a part of the strictly stable distribution family. This observation naturally leads to a stable combination test (SCT), which is a natural extension of the CCT. In the SCT, a stable distribution function is used in the transformation step. It is well-known that a linear combination of independent stably distributed random variables is still stable. In this paper, we show that the SCT statistic is also stably distributed asymptotically even when the underlying p-values are dependent, and therefore, can control the size successfully when the number of tests, $n$, is large enough. We also prove that the SCT has asymptotically optimal powers under sparse alternatives as long as $n$ is large enough. Our simulation results also suggest that our method is robust to dependent structures in finite samples.

This paper is organized as follows. In Section \ref{section: Stable framework}, we summarize the CCT and introduce our SCT method. The sizes and powers of the SCT are explored in theory.
In section \ref{section: simulation}, simulation results are provided to demonstrate favorable sizes and powers of the SCT in finite samples. Section \ref{section: conclusion} concludes. Technical proofs are relegated to the Appendix. 

Throughout this paper, we write $g(x)\sim h(x) $ as $x\to\infty$ to indicate $\lim_{x\to\infty} \frac{g(x)}{h(x)}=1$. The symbol $\mathbb{R}$ indicates the set of all real numbers. The symbol $A\setminus B$ indicates $A\cap B^C$. For instance, $\mathbb{R}\setminus \{0\}$ is the set of all real numbers except 0.  $\Gamma(\cdot)$ is the gamma function.
It is also worth mentioning that there are a few different ways to define stable distributions. In this paper, we follow \cite{nolan2020modeling}'s S1 parametrization.
According Definition 1.5 in \cite{nolan2020modeling}, a random variable $W$ is $\bm{S}(\alpha, \beta, \tau, \delta; 1)$ if $W$ has a characteristic function 
\begin{equation*}
    E[\exp{(iu W)}] = 
    \begin{cases}
    \exp\left\{-\tau^\alpha|u|^\alpha\left[1-i\beta\tan(\frac{\pi\alpha}{2})(\sign u)\right] + i\delta u\right\} & \alpha\neq1 \\
    \exp\left\{-\tau|u|\left[1+i\beta\frac{2}{\pi}(\sign u) \log|u|\right] + i\delta u\right\} & \alpha=1,
    \end{cases}
\end{equation*}
where $u\in (-\infty, \infty)$, $i = \sqrt{-1}$, and
$\sign u $ is the sign function which takes value $-1$ if $u<0$, $0$ if $u=0$, and $1$ if $u>0$.
There are four parameters: an index of stability $\alpha \in(0, 2]$, a skewness parameter $\beta\in[-1, 1]$, a scale parameter $\tau>0$, and a location parameter $\delta\in(-\infty, \infty)$. 
When the distribution is standardized, i.e., when scale $\tau=1$, and location $\delta=0$, the symbol $\bm{S}(\alpha, \beta)$ is an abbreviation for $\bm{S}(\alpha, \beta, 1, 0; 1)$. We write $F(x|\alpha, \beta, \tau, \delta) = \Pr (W<x)$, the distribution function of a stable random variable $W$ with parameters $\alpha, \beta, \tau, {\rm ~ and ~} \delta$. Similarly, when the distribution is standardized, $F(x|\alpha, \beta)$ is short for $F(x|\alpha, \beta, 1, 0)$.

\section{Stable Combination Test}\label{section: Stable framework}
 
Let $H_1, \ldots, H_n$ be individual null hypotheses and the corresponding alternative hypotheses be $H_1^c,\ldots,H_n^c$. The corresponding individual p-values are denoted by
$p_1, \ldots, p_n$. The global null hypothesis, defined as $\bm{H_0} = \bigcap_{i = 1}^n H_1$, is true when all $H_i$s are true. The global alternative hypothesis, defined as
$\bm{H_a} = \bigcup_{ i = 1}^n H_i^c$, is true if there is at least one false null hypothesis. We assume that all p-values are uniformly distributed under the global null hypothesis:
\begin{assumpA} \label{assump:uniform p}
Under the global null hypothesis $\bm{H_0}$,
$p_i$s are uniformly distributed for all $i=1, \ldots, n$ on $[0,1]$.
\end{assumpA}

\begin{remark}{\rm Assumption \ref{assump:uniform p} is satisfied if the individuals tests are exact tests with continuous test statistics. If individual tests are based on asymptotic results or on discrete statistics, the sample sizes should be large enough to satisfy this assumption approximately. 
However, this assumption can be relaxed to some non-uniform p-values as long as individual tests are more conservative than the nominal level. In this case, a combined p-value is also conservative, which means that the combined p-value can control the size correctly. See Remark 5 and Corollary 2 in \cite{liu2020cauchy} for more details. 
For brevity, the rest of this paper assumes uniform p-values under the null.
}\end{remark}

Inspired by \cite{pillai2016unexpected}'s finding that a sum of  dependent Cauchy random variables is still a Cauchy random variable, \cite{liu2020cauchy} proposed to combine p-values based on a Cauchy distribution. The p-values are first transformed into standard Cauchy random variables and then a weighted sum is taken. The test statistic is defined as the weighted sum
\begin{equation*}%\label{equation: CCT}
    T_{n, 1, 0}(\bm{p}) = \sum_{i  =1}^n {w}_i \tan [\pi(0.5-p_i)],
\end{equation*}
where ${w}_i$s are nonnegative weights and $\sum_{i=1}^n {w}_i = 1$. 
If the individual p-values are independent or perfectly dependent, it is straightforward that the test statistic follows the standard Cauchy distribution under the global null hypothesis. One of the main contributions of \cite{liu2020cauchy} is that the tail probability of the test statistic $T_{n;1, 0}(\bm{p})$ is approximately the same as that of a standard Cauchy distribution, even when $p_i$s are correlated. In order to prove the above statement, \cite{liu2020cauchy} assumed the p-values are calculated from standard normal distribution, $X_1, \ldots, X_n$, with $X_i$ follows $N(\mu_i, 1)$, $E[(X_1, \ldots, X_n)^T] = (\mu_1, \ldots, \mu_n)^T = \bm{\mu}$ and $Cov[(X_1, \ldots, X_n)^T] = \Sigma$. They also assume that $(X_i, X_j)$ for $i\neq j$ are pairwise bivariate normally distributed. 
In this case, $p_i=2[1-\Phi(|X_i|)]$ from the $i$th two-sided Z-test, and the test statistic can be rewritten as 
$$T_{n, 1, 0}(\bm{p})=T_{n, 1, 0}(\bm{X}) = \sum_{i = 1}^n {w}_i \tan\{\pi[2\Phi(|X_i|) - 1.5]\}.$$
They proved that $T_{n, 1, 0}(\bm{X})$ has the same tail probability as a standard Cauchy random variable if $\bm{\mu} = \bm{0}$. This means that the a standard Cauchy distribution can be used to derive the threshold for the combined p-value, $T_{n, 1, 0}(\bm{p})$.
\cite{liu2020cauchy} also proved that if $\bm{\mu}\neq\bm{0}$ and if $\bm{\mu}$ satisfies the sparse alternative assumption with large enough signals, this test has an asymptotically optimal power since $T_{n, 1, 0}(\bm{X}) \to\infty$ with probability 1.

Inspired by the fact that a Cauchy distribution is a special case of a stable distribution, we propose a Stable Combination Test (SCT). 
Let $W_{i;\alpha,\beta} = F^{-1}\left(1-p_i|\alpha, \beta\right)$ for $i = 1, \ldots, n$, where $F(\cdot|\alpha, \beta)$ is the distribution function of a standardized stable random variable with stability parameter $\alpha$ and skewness parameter $\beta$. The function $F^{-1}$ indicates the quantiles of $F$, defined by $F^{-1}(p|\alpha, \beta) = \inf\{x\in \mathbb{R}: F(x|\alpha, \beta)\ge p \}$. Though there are no closed forms for stable distribution function except for Normal, Cauchy, and L{\'e}vy distributions, stable quantiles can still be approximated numerically. We define our test statistic as follows:
\begin{equation}\label{equation: SCT statistic}
    T_{n;\alpha,\beta}(\bm{p}) = a_{n;\alpha} \sum_{i=1}^{n}w_i  W_{i;\alpha,\beta}, 
\end{equation}
where $w_i>0$ is the nonnegative weight imposed on $i$th test with $\sum_{i = 1}^n w_i = 1$ and $a_{n;\alpha} = \left(\sum_{j = 1}^n w_j^\alpha\right)^{-1/\alpha}$ is the normalizing factor.

We consider the stability parameters $0<\alpha< 2$. If $\alpha\neq1$, the skewness parameter ranges $-1<\beta\le 1$. If $\alpha =1$, only $\beta=0$ is considered. This is to ensure that $F(\cdot|\alpha,\beta)$ is strictly stable. A distribution is called strictly stable if the sum of i.i.d. random variables from this distribution follows the same distribution up to a normalizing factor without requiring a centering factor. Since $F^{-1}\left(1-p_i|\alpha, \beta\right)$ follows $\bm{S}(\alpha,\beta)$, $\sum_{i=1}^nw_iF^{-1}(1-p_i|\alpha,\beta)$ also follows $\bm{S}(\alpha,\beta)$ up to a normalizing factor if $\bm{S}(\alpha,\beta)$ is strictly stable. 
This motivates our definition of $T_{n;\alpha,\beta}(\bm{p})$ for $\alpha\neq1$ with the normalizing factor $\left(\sum_{j=1}^nw_j^\alpha\right)^{-1/\alpha}$. 
However, when $\alpha=1$, $\bm{S}(1,\beta)$ is no longer strictly stable unless $\beta = 0$.
When $\alpha=1$, and $\beta=0$, $\bm{S}(1,0)$ is a standard Cauchy distribution. Note that a naive extension of the CCT to different $\alpha$ and $\beta$, $\sum_{i = 1}^n w_i F^{-1}(1-p_i|\alpha, \beta)$, would not work without considering the normalizing factor $a_{n;\alpha}$.

\begin{remark}{\rm The test statistic $T_{n;1,\beta}(\bm{p})$ can still be defined for $\beta\neq0$ if an additional centering factor $\frac{2}{\pi}\beta\sum_{j = 1}^n w_j \log w_j$  is considered. However, this direction will not be elaborated in this paper for the following reasons: (i)  it had relatively poor sizes and powers in our unreported simulation, (ii) the requirements for the power proof need to be stronger if this case is included, and (iii) the computation for $F^{-1}(\cdot|1,\beta)$ is unstable if $\beta\neq0$. For these reasons, we only consider $\beta=0$ when $\alpha=1$ for the rest of this paper.
}\end{remark}

The rest of this section addresses that our SCT statistic (\ref{equation: SCT statistic}) is also approximately stably distributed under the global null hypothesis, even when the underlying p-values are not independent. This makes it possible to construct a test that can control the family-wise error rate. We also prove that our test has asymptotically optimal powers under alternatives.

\subsection{Size}

Under assumption \ref{assump:uniform p}, observe that $1-p_i$ is uniformly distributed under the global null hypothesis $\bm{H_0}$ for $1\le i \le n$, therefore, $W_{i;\alpha,\beta} = F^{-1}(1-p_i|\alpha, \beta)$ is identically distributed with marginal distribution $\bm{S}(\alpha, \beta)$. 
If the individual tests are independent, it is trivial that the test statistic 
\begin{equation}\label{distribution of test stat}
    T_{n;\alpha,\beta}(\bm{p}) \overset{d}{=} W_{0;\alpha,\beta},
\end{equation}
where $W_{0;\alpha,\beta}$ follows a stable distribution $\bm{S}(\alpha, \beta)$. This can be seen by simple computations using the property that the sums of $\alpha$-stable random variables are still $\alpha$-stable; see  Proposition 1.4 and equation (1.7) in \cite{nolan2020modeling}.

However, if $W_{i;\alpha,\beta}$ are not independent, there is no exact relationship as in (\ref{distribution of test stat}). Instead, an asymptotic relationship can be established when the number of tests, $n$, is large enough. For instance, \cite{jakubowski1989alpha} showed that a dependent sum of stable random variables is also asymptotically stable. In this paper, we adapt \cite{jakubowski1989alpha}'s Theorems 4.1 and 4.2 to establish that $T_{n;\alpha,\beta}(\bm{p})$ converges to $W_{0;\alpha, \beta}$ under some dependence assumptions in Theorem \ref{theorem: size} below. Thanks to this theorem, type I errors of SCTs can be controlled as long as $n$ is large enough. 

%In this paper, we adapt mixing and dependence assumptions from the extreme value theory field to model the dependence structure among the individual tests under the null. These assumptions aim to control long-range and short-range dependencies. 

The following Assumptions \ref{assump: long range dependence}, \ref{assump: rho mixing}, and \ref{assump: short range dependence} are adapted from equations (4.4), (4.8), and (4.5) of \cite{jakubowski1989alpha}, respectively.

\begin{assumpA}\label{assump: long range dependence}
Let $A\subset \mathbb{R}\setminus \{0\}$
%\mathbb R\cup \{-\infty, +\infty\} 
be a finite union of disjoint intervals of the form $(a, b]$ that do not contain 0, and  $A^c$ be the complementary set of $A$.
The sequence $\{W_{i;\alpha,\beta}\}_{i=1}^n$ satisfies 
\begin{equation*}
\begin{split}
    \sup_{1\le p<q<r\le n}& \Bigg| \Pr\left(\bigcap_{p\le i\le r} ( a_{n;\alpha} w_iW_{i;\alpha,\beta}\in A^c)\right) - \\
    &\Pr\left(\bigcap_{p\le i\le q} ( a_{n;\alpha} w_iW_{i;\alpha,\beta}\in A^c)\right)
    \Pr\left(\bigcap_{q\le i\le r} ( a_{n;\alpha} w_iW_{i;\alpha,\beta}\in A^c)\right) \Bigg| \to 0
\end{split}
\end{equation*}
for every $A$ as $n\to\infty$. 
\end{assumpA}

\begin{assumpA} \label{assump: rho mixing}
The sequence $\{W_{i;\alpha,\beta}\}_{i=1}^n$ is $\rho$-mixing with $\sum_{j =1}^\infty \rho(2^j) < +\infty$. 
A sequence $\{X_i\}_{i=1}^n$ is called $\rho$-mixing if
\begin{equation*}
        \rho(m) = \sup_{1\le i\le j\le n} \sup \left\{|corr(f, g)|: f\in \mathscr{L}^2(\mathscr{F}_i^j), g\in \mathscr{L}^2(\mathscr{F}_{j+m}^\infty)\right\} \xrightarrow[m\to\infty]{}0,
\end{equation*}
where $\mathscr{F}_{i}^j$ is the $\sigma$-field generated by $(X_{i}, \ldots, X_{j})$  and $\mathscr{L}^2 (\mathscr{F}_{i}^j)$ be the space of square-integrable, $\mathscr{F}_{i}^j$-measurable random variables. 
\end{assumpA}

\begin{assumpA}\label{assump: short range dependence}
Let $\Delta(r)$ be an arbitrary division of the set $\{1, 2, \ldots, n\}$ into $r$ segments, $0 = m_0\le m_1\le \cdots\le m_r = n$. For every $\varepsilon >0$, the sequence $\{W_{i;\alpha,\beta}\}_{i=1}^n$ satisfies
\begin{equation*}
    \lim_{r\to \infty} \limsup_{n\to\infty} \inf_{\Delta(r)} \sum_{q = 1}^r \sum_{m_{q-1}<i<j\le m_q}^{} \Pr\left(a_{n;\alpha} w_i|W_{i;\alpha,\beta}|
    >\varepsilon, a_{n;\alpha} w_j|W_{j;\alpha,\beta}|>\varepsilon\right)=0.
\end{equation*}
\end{assumpA}

The first two assumptions, Assumptions \ref{assump: long range dependence} and \ref{assump: rho mixing}, mainly concern the long-range dependence. Assumption \ref{assump: long range dependence} basically assumes asymptotic long-range independence and will be used to address the convergence in distribution of our test statistic with $0<\alpha<1$. Assumption \ref{assump: rho mixing} is a $\rho$-mixing condition, which will be used for $1\le \alpha<2$.
Assumption \ref{assump: short range dependence} limits the amount of short-range dependence by assuming that large values cannot be clustered in a small segment \citep{beirlant2006statistics}.  In our setting under the global null, Assumption \ref{assump: short range dependence} means that at most one $W_i$ can have large absolute value within a small neighborhood.
If $W_i$s are independent, this condition can be easily satisfied. However, this condition may not hold if the short range dependence is too strong.

\iffalse
Assumption \ref{assump: short range dependence} was originally introduced by \cite{leadbetter1974extreme} as their condition $D'$. \cite{davis1983limit, davis1983stable} established a bivariate dependence condition $D'$, which considers not only a large exceedance but also a small one.
%$$\lim \limsup n\sum P(X_1>\varepsilon, X_j>\varepsilon) = 0$$
\cite{husler1986extreme} weaken the Leadbetter's condition $D'$ by dividing the set $\{1, 2, \ldots, n\}$ into segments. Moreover, it has advantages for the verification of the conditions \citep{husler1986extreme}.
%$$\lim \max_{I} \min_{I*\subset I} \sum_{i<j\in I*}P(X_i>\varepsilon, X_j>\varepsilon) = 0$$
We adopt the condition $D_0'$ from \cite{jakubowski1989alpha}, who
gave a general expression of \cite{husler1986extreme}'s condition $D'$. Assumption \ref{assump: short range dependence}, or equation (4.5) of \cite{jakubowski1989alpha}, is weaker than condition D' of \cite{davis1983stable}. However, this does not mean that our set of assumptions is weaker than that of \cite{davis1983stable}, since our Assumption \ref{assump: long range dependence} is much stronger than Condition D of \cite{davis1983stable}.
\fi

\begin{remark}{\rm
Note that long-range and short-range dependencies make the best sense either when there is a natural order among the individual tests or when the tests are independent. This situation is not too unusual in practice. For instance, any sequential testing, including testing a sequence of genes, would fall within this category. %Network data may also be able to satisfy Assumptions \ref{assump: long range dependence}-\ref{assump: short range dependence}, even if they do not have a natural order, if their connection is sparse enough.
}\end{remark}

Suppose \ref{assump: long range dependence} and \ref{assump: short range dependence} or  \ref{assump: rho mixing} and \ref{assump: short range dependence}  are satisfied. Define an i.i.d. sequence $\{\tilde{W}_{i;\alpha,\beta}\}$ that has the same marginal distribution as $\{W_{i;\alpha,\beta}\}$. Note that
$\tilde{T}_{n;\alpha,\beta}(\bm{p}) = a_{n;\alpha}\sum_{i = 1}^n w_i \tilde{W}_{i;\alpha,\beta}$ follows a
$\bm{S}(\alpha, \beta)$ distribution for any $n$, using a similar argument as in  (\ref{distribution of test stat}). By 
applying Theorems 4.1 and 4.2 of \cite{jakubowski1989alpha}, our test statistic  converges in distribution to $\bm{S}(\alpha, \beta)$.
This observation leads to the following Theorem \ref{theorem: size}.

\begin{theorem}\label{theorem: size}
Let $W_{0;\alpha,\beta}$ be a random variable that follows $\bm{S}(\alpha, \beta)$, where $0<\alpha<2$ and $-1\le \beta\le  1$. Assume one of the following conditions:
\begin{enumerate}
    \item[(i)] $0<\alpha<1$ and Assumptions \ref{assump: long range dependence} and  \ref{assump: short range dependence} hold.
    \item[(ii)] $1\le\alpha<2$ and Assumptions  \ref{assump: rho mixing} and \ref{assump: short range dependence} hold.
\end{enumerate}
Then, if the global null hypothesis $\bm{H_0}$ is true,
\begin{equation*}%\label{convergence in distribution}
    T_{n;\alpha,\beta}(\bm{p}) = a_{n;\alpha} \sum_{i = 1}^n w_i W_{i;\alpha,\beta}\overset{d}{\to} \bm{S}(\alpha, \beta)
\end{equation*}
as $n\to\infty$.
\end{theorem}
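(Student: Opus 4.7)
The plan is to reduce the dependent sum to an i.i.d.\ benchmark and then invoke the invariance-type results of \cite{jakubowski1989alpha} to transfer the limit. Concretely, I will introduce an auxiliary i.i.d.\ sequence whose normalized sum equals $\bm{S}(\alpha,\beta)$ exactly for every $n$, and argue that under the stated long-range/mixing and short-range dependence assumptions, $T_{n;\alpha,\beta}(\bm{p})$ inherits this limit.

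First, I would verify the benchmark step. Let $\{\tilde{W}_{i;\alpha,\beta}\}_{i=1}^n$ be i.i.d.\ $\bm{S}(\alpha,\beta)$. Using the characteristic function from the introduction together with $\sum_{i=1}^n w_i = 1$, a direct computation gives
\begin{equation*}
    \tilde{T}_{n;\alpha,\beta}(\bm{p}) = a_{n;\alpha}\sum_{i=1}^n w_i \tilde{W}_{i;\alpha,\beta} \overset{d}{=} \bm{S}(\alpha,\beta)
\end{equation*}
exactly, for every $n$: the scale $\bigl(\sum_j w_j^\alpha\bigr)^{1/\alpha}$ accrued when stacking the individual summands is cancelled by $a_{n;\alpha}$, and the strict stability of $\bm{S}(\alpha,\beta)$ -- the very reason $\alpha=1$ is restricted to $\beta=0$ -- ensures that no centering term is produced. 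In particular $\tilde{T}_{n;\alpha,\beta}(\bm{p})\overset{d}{\to}\bm{S}(\alpha,\beta)$ trivially.

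Next, I would transport this limit to $T_{n;\alpha,\beta}(\bm{p})$ through \cite{jakubowski1989alpha}. Treat $\{a_{n;\alpha} w_i W_{i;\alpha,\beta}\}_{i=1}^n$ as a triangular array; since each marginal is exactly $\bm{S}(\alpha,\beta)$, the row-wise tails are regularly varying of index $-\alpha$, which is the tail hypothesis required by \cite{jakubowski1989alpha}. For case (i), $0<\alpha<1$, their Theorem 4.1 applies: their long-range condition (4.4) and short-range condition (4.5) are precisely our Assumptions \ref{assump: long range dependence} and \ref{assump: short range dependence} written in terms of the array $a_{n;\alpha} w_i W_{i;\alpha,\beta}$, and the theorem states that the dependent sum has the same weak limit as its i.i.d.\ counterpart. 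For case (ii), $1\le \alpha<2$, Theorem 4.2 of the same paper applies with the $\rho$-mixing condition (4.8) -- our Assumption \ref{assump: rho mixing} -- in place of (4.4). In either case the limit is $\bm{S}(\alpha,\beta)$ by the benchmark step, proving Theorem \ref{theorem: size}.

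The main obstacle I anticipate is bookkeeping rather than probability. One has to verify that the hypotheses in \cite{jakubowski1989alpha}, stated for triangular arrays with regularly varying tails and a prescribed normalization, reduce exactly to our Assumptions \ref{assump: long range dependence}--\ref{assump: short range dependence} once the weights and $a_{n;\alpha}$ have been absorbed into the array. A secondary detail arises for $1<\alpha<2$, where Theorem 4.2 carries a centering term; this is immediate to dispose of here because $E[W_{i;\alpha,\beta}]=0$ in the S1 parametrization whenever $\delta=0$ and $\alpha>1$, and for $\alpha=1$ the restriction $\beta=0$ makes $W_{i;\alpha,\beta}$ symmetric so that centering is vacuous. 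After these identifications, the conclusion follows by direct citation of the two theorems in \cite{jakubowski1989alpha}.
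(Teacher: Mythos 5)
Your proposal is correct and follows essentially the same route as the paper: construct the i.i.d.\ benchmark $\tilde{T}_{n;\alpha,\beta}(\bm{p})$, which is exactly $\bm{S}(\alpha,\beta)$ for every $n$ by strict stability and the choice of $a_{n;\alpha}$, and then invoke Theorems 4.1 and 4.2 of \cite{jakubowski1989alpha} under Assumptions \ref{assump: long range dependence}--\ref{assump: short range dependence} to transfer that limit to the dependent sum. Your added remarks on the triangular-array bookkeeping and on why the centering term vanishes for $1<\alpha<2$ (zero mean in the S1 parametrization with $\delta=0$) make explicit details the paper leaves implicit, but the argument is the same.
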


Based on Theorem \ref{theorem: size}, the global null hypothesis is rejected at significance level $s$ if $T_{n;\alpha,\beta}(\bm{p})>t_s$, where $t_s$ is the upper $s$ quantile of $\bm{S}(\alpha,\beta)$.

%\begin{remark}{\rm
%It is worth noting that when $0<\alpha<1$ and $\beta=1$, $F^{-1}$ is always positive. Using a similar argument as in \cite{wilson2019harmonic}, we can show that the SCT controls the FWER in the strong sense. However, when $\alpha\geq1$, the same argument cannot be used any more.}\end{remark}

\begin{remark}{\rm
A stable distribution with parameters $\alpha=1$ and $\beta =0$ is a Cauchy distribution. In this case, our SCT is equivalent to CCT \citep{liu2020cauchy}; i.e.
$$T_{n;1,0}(\bm{p}) = \sum_{i =1}^n w_i \tan[\pi(0.5 - p_i)].$$
\cite{liu2020cauchy}'s method is robust to dependencies among the underlying p-values, similarly to ours. While our theorems for the SCT do cover include the CCT, our technical settings are slightly different from those of   \cite{liu2020cauchy}.
The first difference lies in the forms of dependencies allowed in assumptions. Assumption C.1 of \cite{liu2020cauchy} assumed that every pair of test statistics of individual tests is bivariate normal. While the p-values follow a uniform distribution marginally, their pairwise dependencies are modelled  through bivariate Gaussian copula. On the contrary, our assumptions do not require Gaussian copulas. Instead, we control long-range and short-range dependencies. This means that our assumptions require a structure in dependence such as a natural order. Our assumptions also requires relatively weaker dependencies, whereas \cite{liu2020cauchy} did not impose any restrictions on the strength of dependencies. The second difference is that \cite{liu2020cauchy}'s test is controlled only when the significance level $s$ is small enough, while ours would work for any $s$. This is because \cite{liu2020cauchy}'s Theorems 1 and 2 concern the right tail probabilities only. They showed that the right tail probability of the test statistic is approximately the same as the right tail probability of a Cauchy random variable only when the significance level $s$ is small enough. By contrast, the type I error of the SCT can be controlled at any significance level as long as $n$ is large enough. This is because the in distribution convergence result in our Theorem \ref{theorem: size} is much stronger than the right tail convergence results in Theorems 1 and 2 in \cite{liu2020cauchy}. 
The last difference is that our result holds only when the number of individual tests, $n$, is large enough, while \cite{liu2020cauchy}'s Theorem 1 showed that the CCT can control the size also when $n$ is fixed.

\iffalse
The above mentioned differences are mostly technical. Since the result of our theorem is stronger than that of \cite{liu2020cauchy}, covering a wider range of $\alpha$s and guaranteeing in distribution convergence rather than the right tail approximation, it is natural that our assumptions are stronger. In particular, we need to assume more structure in dependence such as natural orders, and the strength of dependence should be weaker than that is allowed in \cite{liu2020cauchy}. We believe that the CCT and other SCTs are not 
\fi
\iffalse
However, when tests are strongly dependent, there is an interesting difference between between the SCTs with $\alpha\neq1$ and the SCT with $\alpha=1$ (CCT). For instance, if $p_i$ are perfectly correlated, the CCT statistic $T_{n;1,0}(\bm{p})$ is exactly standard Cauchy. On the contrary, the other SCTs do not follow the same distribution. In this case, $T_{n;\alppha,\beta}=a_{n;\alpha}W_{1;\alpha,\beta}$ follows $\bm{S}(\alpha,\beta,a_n,0)$. Note that $a_n=1$ if $\alpha=1$, $a_n\to 0$ if $\alpha<1$, and $a_n\to\infty$ if $\alpha>1$ as $n\to\infty$. It seems that the CCT is the only case where the limiting distribution can maintain the standard form despite the perfect correlation. Other tests will need to change their scale parameter $\tau$ in the perfect correlation case. In particular, if $\alpha<1$, the limiting distribution becomes degenerate.
\fi
}\end{remark}

\begin{remark}\label{remark: Stouffer's Z-score}{\rm 
The form of our SCT statistic also resembles Stouffer's Z-score \citep{stouffer1949american}.
A stable distribution with tail parameter $\alpha =2$ is a normal distribution no matter what the skewness parameter $\beta$ is. In this case, our SCT test statistic is equivalent to Stouffer's Z-score; i.e.
$$T_{n;2,\beta}(\bm{p}) = \frac{1}{\sqrt{\sum_{j=1}^nw_j^2}} \sum_{i=1}^n w_i\Phi^{-1}(1-p_i).$$
Stouffer's Z-score \citep{stouffer1949american, mosteller1954selected} method was designed for independent hypotheses. \cite{abelson2012statistics} found that Stouffer's test is more sensitive to consistent departures from the null hypothesis than Fisher's method for independent tests. Although \cite{kim2013stouffer} found that Stouffer's test works well in the analysis of large scale microarray data for dependent tests, and the form of our SCT statistic can cover Stouffer's Z-score, we do not include $\alpha=2$ in our proof for Theorem \ref{theorem: size}. This is because the simulation results in Section \ref{section: simulation} show that Stouffer's Z-score always performs worse than the SCTs with $\alpha<2$. In particular, Stouffer's Z-score tends to severely over-reject under strong dependencies. Their size-adjusted powers are always dominated by the other choices of $\alpha$s. Accordingly, even though it is not impossible that Stouffer's Z-score still works in dependent cases, we do not pursue this direction in theory.
}\end{remark}

\subsection{Power}\label{section: power}

In this section, we prove that our SCT test statistics have asymptotically optimal powers under sparse alternative hypotheses. We consider a similar setting to the one in \cite{liu2020cauchy}. 
Let $\bm{X} = [X_1, X_2, \ldots, X_n]^T$ be the collection of test statistics, where $X_i$ corresponds to the $i$-th individual test. Suppose $X_i$ marginally follows a normal distribution. Denote $E[\bm{X}] = \bm{\mu}$ and $Cov(\bm{X}) = \bm{\Sigma}$. Without loss of generality, we assume $\bm{\Sigma}$ is the correlation matrix, i.e., each $X_i$ has variance 1. The p-value for $i$-th two-sided test is $p_i = p(X_i) = 2[1-\Phi(|X_i|)]$. The global null hypothesis is  specified as $\bm{H_0}: \bm{\mu} = \bm{0}$ versus the global alternative hypothesis $\bm{H_a}: \bm{\mu} \neq \bm{0}$.

\begin{assumpA} \label{assump: alternative}
Let $S = \{1\le i\le n: \mu_i \neq 0\} $, the collection of indices for which the individual null hypotheses $H_i$s are false. Let $S_{+} =\{1\le i\le n: \mu_i > 0\}$, and assume $|S_{+}|\ge|S|/2$ without loss of generality.  
\begin{enumerate}
    \item\label{assump part: S complementary} The p-values in $S^c$ follow a uniform distribution and $\{W_{i;\alpha,\beta}\}_{i\in S^c}$ satisfy the requirements in Theorem \ref{theorem: size} with $a_{|S^c|;\alpha}$. 
    \item\label{assump part: sparse} The number of elements in $S$ is $ n^{\gamma}$ with $0<\gamma<0.5$. 
    \item\label{assump part: magnitude} The magnitude for all nonzero $\mu_i$ is the same. For $i\in S$, $|\mu_i| = \mu_0= \sqrt{2r\log n}$, and $\sqrt{r} + \sqrt{\gamma} >\max\{\sqrt{\alpha},1\}$.
    \item \label{assump part: min weight} There exists a positive constant $c_0$ such that $\min_{i=1}^n w_i \ge c_0 n^{-1}$. The sum of weights $\sum_{j\in S}w_j = n^{\gamma-1}$.
 
\end{enumerate}
\end{assumpA}

Part \ref{assump part: S complementary} of Assumption \ref{assump: alternative} requires the p-values in the set $S^c$ satisfy Assumptions \ref{assump: long range dependence} and \ref{assump: short range dependence} if $0<\alpha<1$ or satisfy Assumption  \ref{assump: rho mixing} and \ref{assump: short range dependence} if $1\le \alpha<2$. Under this condition, the contribution of p-values in the set $S^c$ to the test statistic is bounded. Part \ref{assump part: sparse} requires a sparse alternative, which is commonly taken in the multiple testing field. Part \ref{assump part: magnitude} controls the strength of signals. The magnitude of the nonzero signals should be large enough to ensure the test statistic is arbitrary large. Part \ref{assump part: min weight} helps keep the contribution of $\max_{i\in S}p_i$ under control. Note that $p_i$ can still be close to 1 even when $i\in S$. In this case, $F^{-1}(1-p_i)$ can be negative, possibly leading to a less powerful test. This assumption is to guarantee that such $p_i$s would not affect the power of the test asymptotically.

\begin{theorem}\label{theorem: power}
Consider $0<\alpha<2$ and $-1<\beta\le 1$. Under Assumption \ref{assump: alternative}, for any significance level $s$, the power of the SCT converges to 1 as $n\to\infty$:
$$\lim_{n\to\infty}\Pr[T_{n;\alpha,\beta}(\bm{p})>t_s] =1,$$
where $t_s$ is the upper $s$-quantile of stable distribution $\bm{S}(\alpha, \beta)$, i.e., $F(t_s|\alpha, \beta) =1-s$. 
\end{theorem}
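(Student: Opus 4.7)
The plan is to decompose $T_{n;\alpha,\beta}(\bm{p}) = T_S + T_{S^c}$, where $T_S = a_{n;\alpha}\sum_{i\in S} w_i W_{i;\alpha,\beta}$ carries the signal and $T_{S^c} = a_{n;\alpha}\sum_{i\in S^c} w_i W_{i;\alpha,\beta}$ behaves like a null test statistic. It then suffices to show that $T_{S^c}$ is stochastically bounded from below and that $T_S \to +\infty$ in probability: given these, for any $\epsilon>0$ one picks $M$ so that $\limsup_n P(T_{S^c}<-M)<\epsilon$; then $\liminf_n P(T_n>t_s)\geq \liminf_n P(T_S > t_s + M) - \epsilon = 1-\epsilon$, and letting $\epsilon\downarrow 0$ concludes.

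For the null contribution, I would apply Theorem \ref{theorem: size} directly to the subsequence $\{W_{i;\alpha,\beta}\}_{i\in S^c}$, as permitted by Part \ref{assump part: S complementary} of Assumption \ref{assump: alternative}. The theorem yields $a_{|S^c|;\alpha}\sum_{i\in S^c} w_iW_{i;\alpha,\beta}\stackrel{d}{\to}\bm{S}(\alpha,\beta)$, and because $|S|/n=n^{\gamma-1}\to 0$ together with $\sum_{j\in S}w_j=n^{\gamma-1}\to 0$ force $a_{n;\alpha}/a_{|S^c|;\alpha}\to 1$, we obtain $T_{S^c}\stackrel{d}{\to}\bm{S}(\alpha,\beta)$, hence $T_{S^c}=O_p(1)$.

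The core of the proof is $T_S\to +\infty$. The argument combines two tail estimates. First, the Gaussian Mills ratio for $X_i\sim N(\mu_0,1)$ with $\mu_0=\sqrt{2r\log n}$ gives $\log(1/p_i)\sim |X_i|^2/2$ where $p_i=2[1-\Phi(|X_i|)]$. Second, the right-tail asymptotic for strictly stable $\bm{S}(\alpha,\beta)$ with $-1<\beta\le 1$, $\bar F(x|\alpha,\beta)\sim C_{\alpha,\beta}x^{-\alpha}$, inverts to $W_{i;\alpha,\beta}\gtrsim p_i^{-1/\alpha}$ for small $p_i$. Combining these with a marginal extreme-value calculation over the $|S_+|\geq n^\gamma/2$ signal coordinates (each marginally $N(\mu_0,1)$), some $|X_i|$ with $i\in S$ reaches $(\sqrt r+\sqrt\gamma-\eta)\sqrt{2\log n}$ with probability tending to $1$ for any $\eta>0$, forcing $W_{i;\alpha,\beta}\gtrsim n^{(\sqrt r+\sqrt\gamma)^2/\alpha - o(1)}$. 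With $a_{n;\alpha}\asymp n^{(\alpha-1)/\alpha}$ and the lower bound $w_i\geq c_0/n$ from Part \ref{assump part: min weight}, this single summand diverges polynomially in $n$ under the threshold $\sqrt r+\sqrt\gamma>\max\{\sqrt\alpha,1\}$ imposed by Part \ref{assump part: magnitude}. The principal obstacle is then absorbing the remaining summands of $T_S$: $W_{i;\alpha,\beta}$ can be very negative at indices where $|X_i|$ is atypically small, potentially cancelling the positive contribution. I would handle this by decomposing $W_{i;\alpha,\beta}$ into positive and negative parts and exploiting the lighter left tail of $\bm{S}(\alpha,\beta)$ for $\beta>-1$, together with $\sum_{i\in S}w_i=n^{\gamma-1}$ and $w_i\geq c_0/n$, to show $a_{n;\alpha}\sum_{i\in S}w_iW_{i;\alpha,\beta}^- = O_p(1)$, so the divergent positive term just established dominates and $T_S\to+\infty$.
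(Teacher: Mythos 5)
Your proposal follows essentially the same route as the paper's proof: the same split into a signal part and a null part, the same use of Theorem \ref{theorem: size} on $S^c$ under Part \ref{assump part: S complementary} of Assumption \ref{assump: alternative} (the paper keeps the explicit rescaling $\left(\sum_{j=1}^n w_j^\alpha/\sum_{k\in S^c}w_k^\alpha\right)^{1/\alpha}$ and bounds the left-tail probability of the rescaled sum rather than arguing the ratio of normalizers tends to one, but the effect is identical), and the same engine for divergence: $\max_{i\in S}|X_i|\ge \mu_0+\sqrt{2\log|S_+|}+o_p(1)$ from \cite{cai2014lemma6} as in \cite{liu2020cauchy}, inverted through the stable right-tail power law (the paper's Lemma \ref{lemma: large x}), combined with $w_i\ge c_0/n$, $a_{n;\alpha}\ge\min\{n^{1-1/\alpha},1\}$ (Lemma \ref{lemma:an}) and Part \ref{assump part: magnitude} to make the single maximal summand diverge polynomially.

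The one place you genuinely depart is the control of the negative contributions from $S$, and that is also where your sketch is thinnest. The paper bounds the remainder by $\left(\sum_{j\in S}w_j - c_0 n^{-1}\right)\min_{i\in S}F^{-1}[1-p(X_i)|\alpha,\beta]$, shows $\min_{i\in S}|X_i|\ge \epsilon_n=n^{\alpha\gamma_0-1}$ with probability tending to one by a union bound, and applies the deterministic left-tail bound of Lemma \ref{lemma: small x} to get a term of order $n^{\gamma-\gamma_0}=o(1)$. Your route via $a_{n;\alpha}\sum_{i\in S}w_iW_{i;\alpha,\beta}^-$ can be made to work — under the alternative $W_{i;\alpha,\beta}$ stochastically dominates a marginal $\bm{S}(\alpha,\beta)$ variable, so $\Pr\left(a_{n;\alpha}w_iW_{i;\alpha,\beta}^->t\right)\lesssim (1-\beta)t^{-\alpha}\,w_i^\alpha/\sum_j w_j^\alpha$ — but two things must be filled in. First, a tail bound on individual summands only controls $\max_{i\in S}$, not the sum of $|S|=n^\gamma\to\infty$ terms, and for $\alpha\le 1$ the negative part has infinite mean, so you need a truncation-plus-first-moment argument (or the paper's uniform lower bound on $\min_{i\in S}|X_i|$). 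Second, for $-1<\beta<1$ the left tail is not "lighter": it is the same power law $x^{-\alpha}$ with constant proportional to $(1-\beta)$ instead of $(1+\beta)$; only $\beta=1$ gives a genuinely lighter left tail. Neither point is fatal, but both are needed to make the $T_S\to+\infty$ step rigorous.
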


The proof is attached in Appendix \ref{append: proof on power}. 

For Theorem \ref{theorem: power}, we no longer consider $\beta=-1$. The powers of small $\beta$s tend to be dominated by other $\beta$s given the same $\alpha$, making it not worth considering $\beta=-1$ for a powerful test. This is because the left tail becomes heavier as $\beta$ gets closer to $-1$, which prevents a test from having better powers. See Section \ref{section: simulation} for a related discussion.

\section{Simulation Results}\label{section: simulation}
In this section, we explore the size, raw power and size-adjusted power of the SCT in finite samples in a similar setting as \cite{liu2020cauchy}. A collection of test scores, $ \bm{X}$, is drawn from $N_{n}(\bm{\mu}, \bm{\Sigma})$. All diagonal elements of the covariance matrix $\bm{\Sigma}$ are set as 1. There are four models for the covariance matrix $\bm{\Sigma}$ considered to represent different dependence structures. 
Model 1 is the scheme where the individual tests are independent. 
In Models 2, 3, and 4, the off-diagonal entries of the covariance matrix $\bm{\Sigma} = (\sigma_{ij})$ are functions of $\rho$.
\begin{enumerate}
    \item Independent. The correlation between each pair of underlying test scores is zero, i.e., $\bm{\Sigma} = I_n$. 
    \item AR(1) correlation. The correlation between a pair of underlying test scores  decays exponentially fast as their distances increase;
    $\sigma_{ij} = \rho^{|i-j|}$. 
    \item Exchangeable structure. The correlation between each pair of underlying test scores $\sigma_{ij} = \rho$ for all $i \neq j$. 
    \item Polynomial decay. The correlation between the $i$th and $j$th test scores, $\sigma_{ij}$, is set to be $\frac{1}{0.7 + |i-j|^\rho}$. It should be noted that the correlation is a decreasing function of $\rho$, unlike Models 2 and 3 above.
    
\end{enumerate}

The simulation is conducted in \texttt{R}. We use \texttt{qstable} function in \texttt{stabledist} package \citep{wuertz2016package} to calculate quantiles of stable distributions. We truncate too small and too large p-values at $10^{-6}$ and $1-10^{-6}$, respectively. This is to avoid technical issues involved with too large quantiles in absolute values in the \texttt{qstable} function.  The number of Monte Carlo replications is 1000. The number of  individual tests in each Monte Carlo replication is 40 ($n = 40$). The significance level is set to be $5\%$. The parameter $\rho$ that governs the strength of the dependencies is set to be $0.2, 0.4, 0.6,$ or $0.8$. 
Note that larger $\rho$ implies stronger dependencies in Models 2 and 3, and weaker dependencies in Model 4. For the SCT, all combinations of $\alpha = 0.1, 0.3,\ldots, 1.9$ and $\beta = -0.8,-0.6,\ldots, 1$ are considered in addition to $(\alpha,\beta)=(1,0)$, which is equivalent to the CCT. We also consider Stouffer's Z-score, which would correspond to the SCT with $\alpha=2$ and $\beta=0$. Note that although Stouffer's Z-score can be written in the SCT form, Stouffer's Z-score is not a part of the SCT family we consider in our paper. The test statistics are calculated as equation (\ref{equation: SCT statistic}) with equal weights $w_i = 1/n$.

When calculating the sizes, data are generated from a multivariate normal distribution with mean $\bm{\mu} = \bm{0}$. For powers, a sparse alternative hypothesis with $\gamma = 0.43$ and $r = 0.54$ is considered. This means that we set $\mu_i=\sqrt{2r\log n} \approx 2$ to randomly chosen $[n^{\gamma}]=[n^{0.43}] = 4$ indices in each replication. 
Note that $(\sqrt{0.43} + \sqrt{0.54} )^2\approx 1.93>\max(\alpha,1)$ for all $\alpha$s considered in our simulation, satisfying Part \ref{assump part: magnitude} of Assumption \ref{assump: alternative}. For raw powers, the cutoff values are taken directly from the corresponding stable distributions.
For the size-adjusted powers,  1000 Monte Carol replications are first drawn under the global null hypothesis. Combined test statistics are calculated for each Monte Carlo replication. The simulation-based cutoff for each method is determined as the 95\% quantile of the 1000 test statistics. After that, another set of 1000 Monte Carlo replications is drawn under the sparse alternative. The number of test statistics that are greater than the simulation-based cutoffs are counted to compute the size-adjusted powers. 

Figures \ref{figure: indep}-\ref{figure: Poly alpha>2} present the sizes, raw powers, and size-adjusted powers for different $\alpha$s and $\beta$s under the four models. Red dots indicate the CCT, which corresponds to the SCT with $\alpha=1$ and $\beta=0$. Black dots indicate Stouffer's Z-scores. The black solid lines in the size plots represent the nominal significance level $0.05$.

Figure \ref{figure: indep} presents the sizes, raw powers, and size-adjusted powers when the underlying tests are independent. All methods considered in our simulations, including the Stouffer's Z-score, are supposed to work fine in this case. The Stouffer's Z-score and the SCT with $\alpha=1.7$ and $\beta=-0.8$ have the best size, 0.05. However, these two methods are not necessarily the best due to their relatively low powers. In particular, the Stouffer's Z-score is the lowest in both raw and size-adjusted powers. The SCT with $\alpha=1.7$ and $\beta=-0.8$ also has relatively low raw and size-adjusted powers. In this independent scenario, the SCT with $\alpha\ge1$ and $\beta\ge0$ tends to have higher powers without loosing the size control.
In particular, most SCT methods including the CCT controls the size quite successfully, although there is a slight tendency of under-rejection. 
In terms of powers of the SCT, when $\alpha<1$, $\beta$ does not seem to affect the powers much, whereas when $\alpha>1$ the powers have an increasing trend as $\beta$ increases.
It is noticeable the CCT tends to have better powers than the SCTs with $\alpha<1$, keeping the sizes similar.
However, when $\alpha>1$ and $\beta>0$, the SCT performs better in general than the CCT, both in sizes and powers.
In particular, when $\alpha = 1.5, 1.7$ and $\beta = 1$, the SCT performs best with well-controlled sizes and highest size-adjusted powers. 
%In particular, when $\alpha = 1.3, 1.5, 1.7$ and $\beta \ge 0$, the SCT has higher raw and size-adjusted powers than CCT. 
%When $\alpha$ is close to 1 or 2 (i.e., $\alpha = 1.1, 1.9$), $\beta$s close to 1 can beat the CCT by a small margin. 

% stouffer of three models
Models 2-4 represent dependent cases. In these cases, the SCT works better than Stouffer's Z-score. When tests are dependent, Stouffer's Z-score is not supported in our theorems. See Remark \ref{remark: Stouffer's Z-score}. Stouffer's Z-score is the weakest in our simulations as well, as can be seen in Figures \ref{figure: AR alpha>2}-\ref{figure: Poly alpha>2}. In Models 2-4, Stouffer's Z-score tends to over-reject, and this tendency gets worse as the dependency gets stronger. Stouffer's Z-score also tends to have much lower powers than the SCT methods. While it sometimes has decent raw powers (e.g., Model 3 with $\rho=0.6$ and $0.8$), these powers are inflated due to their higher sizes. Their size-adjusted powers are consistently low in settings.

As for the behavior of the SCT family including the CCT, it seems that different sets of $\alpha$ and $\beta$ work better in different situations. There is a tendency that the stronger the dependency is, the more oversized larger $\alpha$s and the more undersized smaller $\alpha$s. Smaller $\alpha$s are generally required to keep the sizes under control for stronger dependencies. However, too small $\alpha$s may result in too conservative tests. In general, the SCTs with $\alpha\ge1$ paired with larger $\beta$s tend to have well-controlled sizes and better powers for models with weaker or no dependencies, whereas $\alpha\le 1$ paired with larger $\beta$s tend to have better performances when dependencies are stronger.

Figure \ref{figure: AR alpha>2} presents the sizes and powers in Model 2. With weaker dependencies ($\rho=0.2$ or $0.4$), the SCT with $\alpha\ge 1$ has the best size-adjusted powers and well-controlled sizes. 
In particular, when $\rho=0.2$, SCT with $\alpha = 1.7$ and $\beta =1$ works the best with sizes less than 0.05 and highest raw powers and size-adjusted powers.
When $\rho = 0.4$, the SCT with $\alpha = 1.3$ and $\beta = 0.6$ works the best with size very close to the target value and the highest size-adjusted power.
In Model 2 with higher dependencies ($\rho=0.6$ or $0.8$), the SCT with $\alpha< 1$ has the best size-adjusted powers and well-controlled sizes. 
For instance, when $\rho = 0.6$, SCTs with $\alpha = 0.9$ have the highest size-adjusted powers and under-controlled sizes. 
When $\rho = 0.8$, the SCT with $\alpha = 0.1$ has the highest size-adjusted power with the size under control. The effect of $\beta$ is not as much. In general, $\beta=1$ works reasonably well for all $\alpha<1$.

Figure \ref{figure: Exch alpha>2} presents the sizes and powers of Model 3. The dependencies in Model 3 are stronger than those of Model 2 given the same $\rho$. As a result, smaller $\alpha$s tend to work better in this case compared to the Model 2 cases.
%which is the reason why there are more over-rejections with large $\alpha$s. When $\alpha$ is close to 0, under-rejection tends to get worse as the dependency gets stronger. When $\alpha$ is close to 2, over-rejection tends to get worse with higher dependencies. 
In Model 3, when dependency is relatively weak with $\rho=0.2$, the SCT with $\alpha = 1.1$ and $\beta = 1$ works best with size 0.048, raw powers 0.459 and size-adjusted powers 0.469. 
When the dependency is moderate or strong in Model 3, the SCT with $\alpha$ close to 0 and $\beta$ close to 1 tends to have the best size-adjusted powers. In particular, when $\rho = 0.4, 0.6,$ or $0.8$, the SCT with $\alpha = 0.1$ and $\beta = 1$ has the greatest size-adjusted powers and controlled sizes.

Figure \ref{figure: Poly alpha>2} presents model 4 cases. Model 4's dependencies are even stronger than those of Model 3 in general. Therefore, smaller $\alpha$s, compared to other models, tend to produce better sizes and powers. Note that unlike Models 2 and 3, the larger $\rho$ is, the weaker the dependencies are in this case. For all four $\rho$s considered, $\alpha<1$ and larger $\beta$s tend to control sizes better with higher power. In particular, when $\rho = 0.4, 0.6, 0.8$ in Model 4, $\alpha = 0.5$ or $0.7$ and $\beta = 1$ have the best size-adjusted powers and under-controlled sizes. Under the strongest dependency setting with $\rho = 0.2$, the SCTs with $\alpha = 0.1, 0.3$ and $\beta = 1$ produce the highest size-adjusted powers. 
In terms of raw powers when $\rho = 0.2, 0.4$ or 0.8, the SCTs with $\alpha = 0.9$ and $\beta = 1$ work the best with under-controlled sizes and largest raw powers. When $\rho = 0.6$, the SCT with $\alpha = 0.9$ and $\beta = 0.8$ has the highest raw power.

The different performances of different $\alpha$s and $\beta$s seem to be strongly connected to how heavily the tails of the transformed p-values are. The right tail probability of a stable random variable is $P(W_{0;\alpha,\beta}>x)\sim c_\alpha(1+\beta)x^{-\alpha}$ as $x\to\infty$, where $c_\alpha=\sin(\pi\alpha/2)\Gamma(\alpha)/\pi$ and $W_{0;\alpha,\beta}$ is a stable random variable that follows $\bm{S}(\alpha,\beta)$. This right tail approximation holds for all $0<\alpha<2$ and $-1<\beta\le 1$. It is noteworthy that the right tail probability is an increasing function in $\beta$ and a decreasing function in $\alpha$ for large enough $x$. This means that the smaller the $\alpha$ is and the larger the $\beta$ is, the stable transformed p-values $W_{i;\alpha,\beta}$ used for the our combined test statistic in equation (\ref{equation: SCT statistic}) has heavier right tails. It seems that this heavier right tail is particularly useful when the dependence is strong for the size control. One interesting observation is that the heavier tail seems to affect in different ways under the null and the alternative. Under the null, the heavier tails lead to reject less, often correcting the over-rejection behavior under stronger dependencies. This is also the cause of under-rejection when $\alpha$ is too small. Under the alternative, the effects of heavier tail vary depending on the source. The heavier tail induced by larger $\beta$s usually lead to reject more, leading to better raw powers. On the contrary, the heavier tail behavior due to smaller $\alpha$s on powers is not monotone. The raw powers increase as $\alpha$ increases, with a peak at around $\alpha= 1.5$ or $1.7$, and then rapidly decreases. The only exception to the above observation on powers is Model 3 with $\rho=0.8$. In this case, the powers decrease as $\beta$ increases when $\alpha>1$, and the powers tend to increase as $\alpha$ increases.

The power behaviors due to $\alpha$s are somewhat consistent with the size behavior. However, the increasing powers as $\beta$ increases cannot be explained by the heavier right tails. This behavior as well as the under-rejection for very small $\alpha$s may be explained by the left tail behavior. Our SCT is an additive combination test where its summands may be negative. When the p-values are too close to 1, the stable transformed p-values take large negative values, which might reduce the chance of detecting the false null hypothesis when added to the test statistic. The large p-values are connected to the left tail probability of a stable distribution, which approximately has a power law $\Pr(W_{0;\alpha,\beta}<-x)\sim c_\alpha(1-\beta)x^{-\alpha}$ for large positive $x$ when $-1\le\beta<1$. When $\beta=1$, 
$\Pr(W_{0;\alpha,1}<-x)<P(W_{0;\alpha,\beta}<-x)$ for any $\beta<1$. This means that the left tail probability of $W_{0;\alpha,\beta}$ is a decreasing function in $\beta$ for all $-1\le\beta\le1$ as well as in $\alpha$, unlike the right tail probability. As a result, the smaller $\alpha$s and $\beta$s are, the heavier the left tails are. Since heavier left tails may result in the loss of powers, the powers could decrease as $\alpha$ and $\beta$ decreases. This is indeed consistent with our observations in powers in most cases.

In addition, notice that the effect of $\alpha$ on both tails is exponential whereas that of $\beta$ is only linear. In particular, for small $\alpha$s, the effect of $\beta$ is not as noticeable. This is because the effect of $\alpha$ dominates over the effect of $\beta$ in these cases. On the contrary, the effect of $\beta$s is more noticeable both in sizes and powers when $\alpha$ is relatively large. This is because the changes in the tail probabilities due to $\alpha$ is dominated by that of $\beta$ for $\alpha$s closer to 2.

It is also noteworthy that the SCT's behavior is somewhat consistent in all models. In particular, the exchangeable structure in Model 3 does not satisfy our long-range independence assumption in Assumptions \ref{assump: long range dependence} and \ref{assump: rho mixing}, unlike the other two dependent models, Models 2 and 4. The fact that the SCT's finite sample behavior in Model 3 was similar to that of in Models 2 and 4 implies that the SCT can in fact be applied to a wider range of conditions.

In summary, the SCT can control the sizes in finite samples for all the four models when $0<\alpha\le 1$ even under strong dependencies, unlike the Stouffer's Z-score. However, when $1<\alpha< 2$, sizes tend to be substantially inflated under moderate and strong dependencies in finite samples. The size behaviors can be explained by how heavy the right tails of the transforming stable distributions are. In general, the heavier right tails seem to help control the size against strong dependencies. The heavier right tails are obtained when $\alpha$ is small and $\beta$ is larger. This explains why smaller $\alpha$ and larger $\beta$s are preferred in the strong dependency case.

The powers of the SCTs tend to decrease as the dependency gets stronger. In general, the SCTs with $\alpha>1$ and large $\beta$ tend to have the best powers under no or weak dependencies, whereas the SCTs with $\alpha\le 1$ and large $\beta$ have the best sizes and powers under moderate and strong dependencies. The powers are affected by how heavy left tails of the transforming stable distributions. In general, larger $\alpha$s and $\beta$s lead to lighter left tails, which often result in better powers.

Based on this simulation results, we recommend using the SCT with $\alpha \approx 1.5$ and $\beta=1$ if the dependence is suspected to be relatively low, and using the SCT with $0.5\le\alpha\le1$ and $\beta\ge0$ when the individual tests are suspected to be strongly dependent. If there is no knowledge on the strength of the dependencies, we recommend using either the CCT or the SCT with $\alpha\approx 0.9$ and $\beta=1$, which lead to the best size-controlled tests without loosing too much power in most cases in our simulation.

\begin{figure}
    \centering
   \includegraphics[width = 13.9cm,height = 4.6cm]{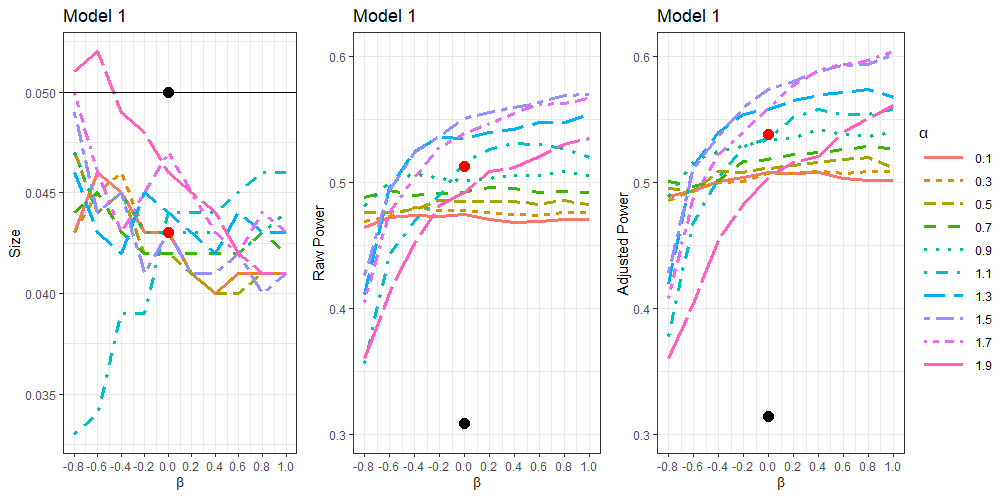}
    \caption{Sizes, raw powers and size-adjusted powers of Model 1 where tests are independent. Lines indicate the SCT with different $\alpha$s and $\beta$s. Red and black dots represent the CCT (SCT with $\alpha=1$ and $\beta=0$) and Stouffer's Z-score, respectively.  }
    \label{figure: indep}
\end{figure}

\begin{figure}
    \centering
\includegraphics[width = 13.9cm,height = 4.6cm]{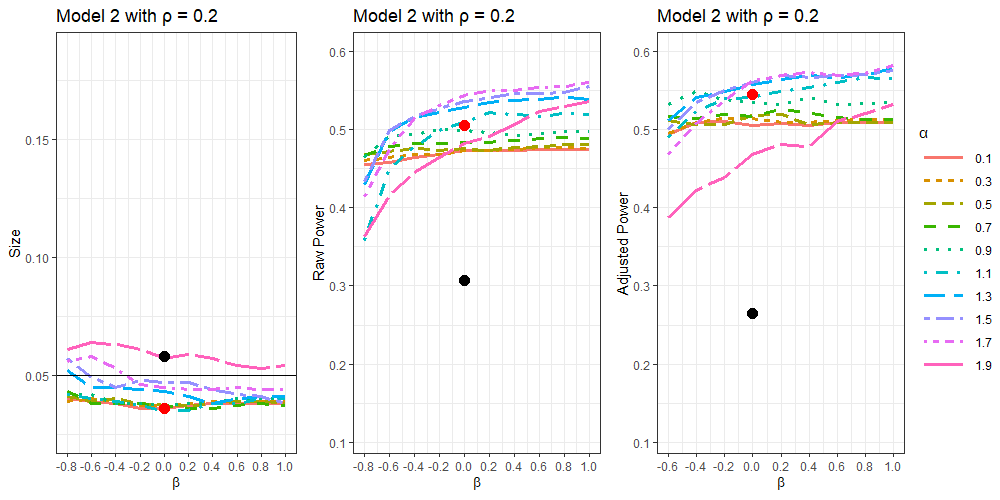}
\includegraphics[width = 13.9cm,height = 4.6cm]{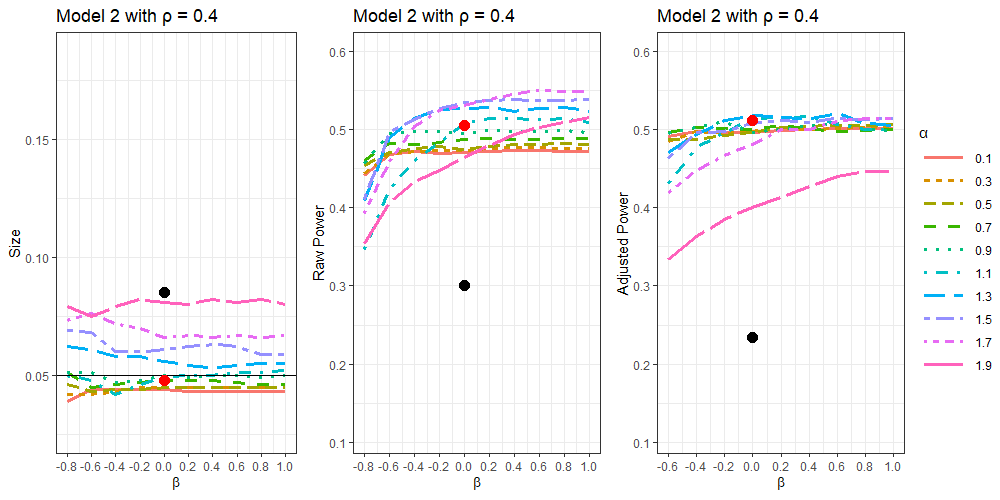}
\includegraphics[width = 13.9cm,height = 4.6cm]{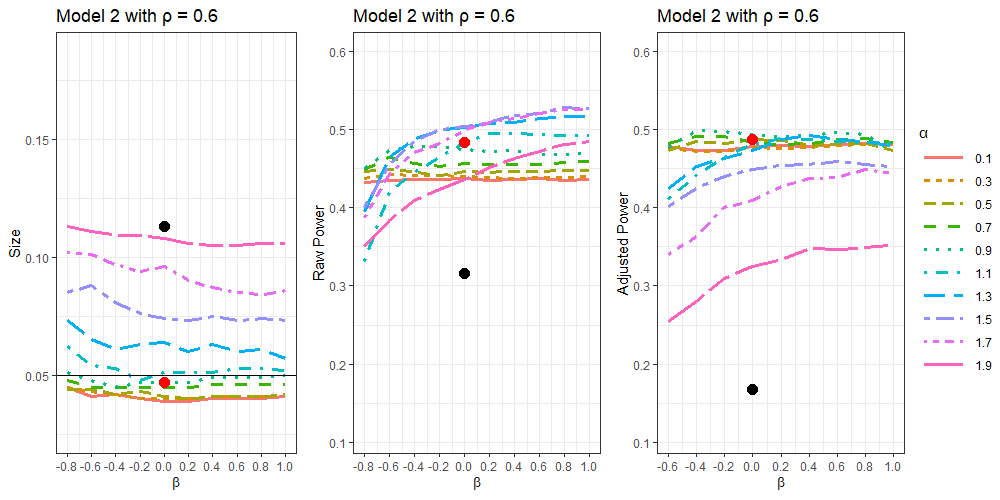}
\includegraphics[width = 13.9cm,height = 4.6cm]{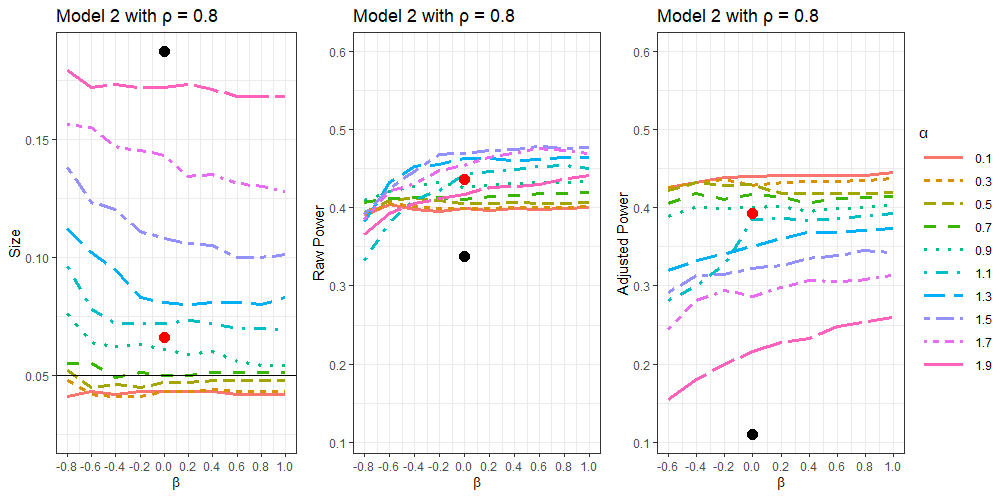}
    \caption{Sizes, raw powers and size-adjusted powers of Model 2 where tests are correlated with AR(1) correlation structures with different $\rho$s. Lines indicate the SCT with different $\alpha$s and $\beta$s. Red and black dots represent the CCT (SCT with $\alpha=1$ and $\beta=0$) and Stouffer's Z-score, respectively.  }
    \label{figure: AR alpha>2}
\end{figure}

\begin{figure}
    \centering
    \includegraphics[width = 13.9cm,height = 4.6cm]{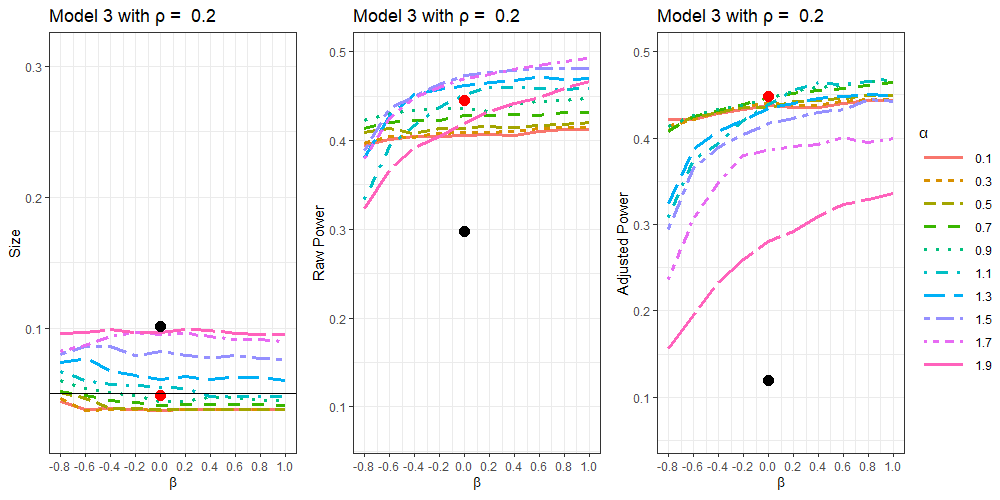}
\includegraphics[width = 13.9cm,height = 4.6cm]{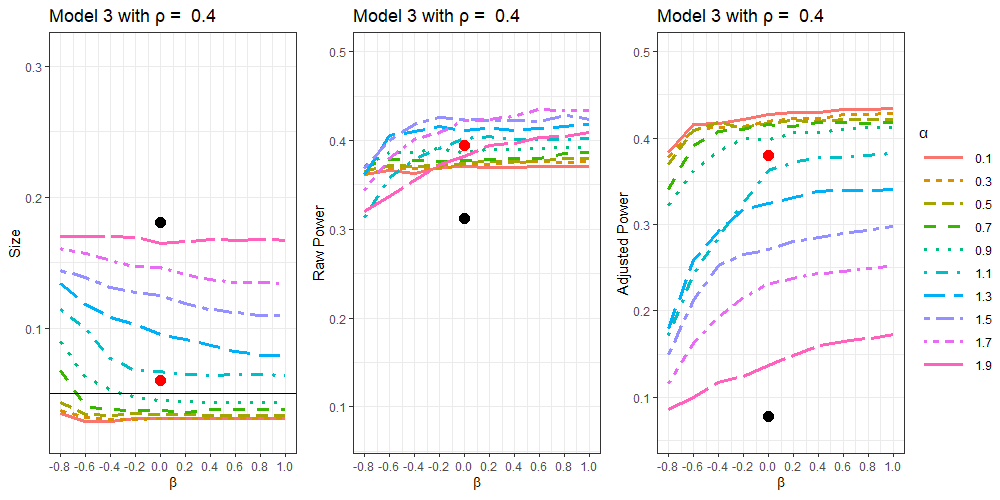}
\includegraphics[width = 13.9cm,height = 4.6cm]{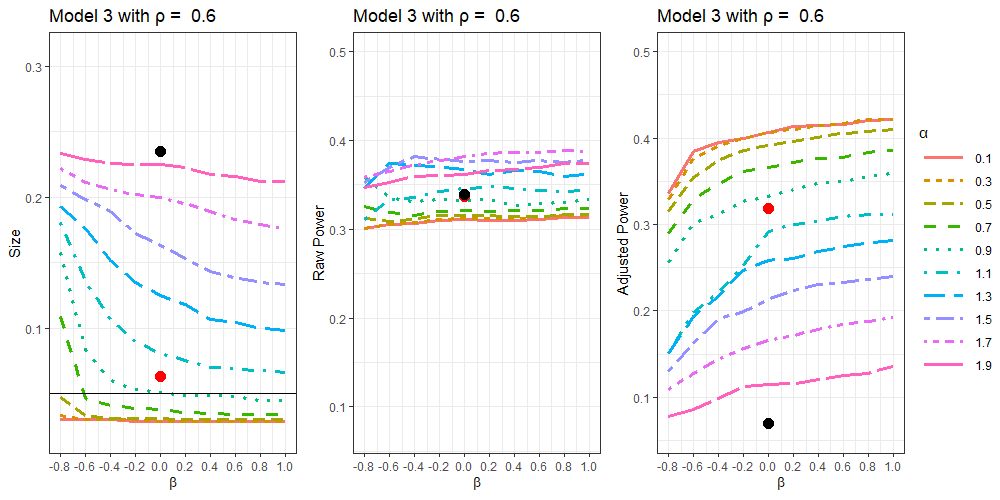}
\includegraphics[width = 13.9cm,height = 4.6cm]{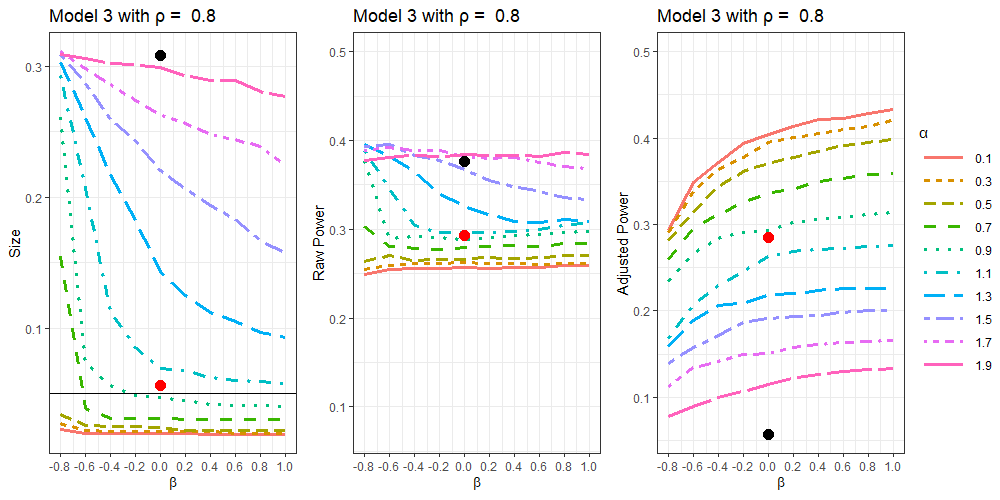}
    \caption{Sizes, raw powers and size-adjusted powers of Model 3 where tests are correlated with exchangeable correlation structures with different $\rho$s. Lines indicate the SCT with different $\alpha$s and $\beta$s. Red and black dots represent the CCT (SCT with $\alpha=1$ and $\beta=0$) and Stouffer's Z-score, respectively.  }
    \label{figure: Exch alpha>2}
\end{figure}

\begin{figure}
    \centering
    \includegraphics[width = 13.9cm,height = 4.6cm]{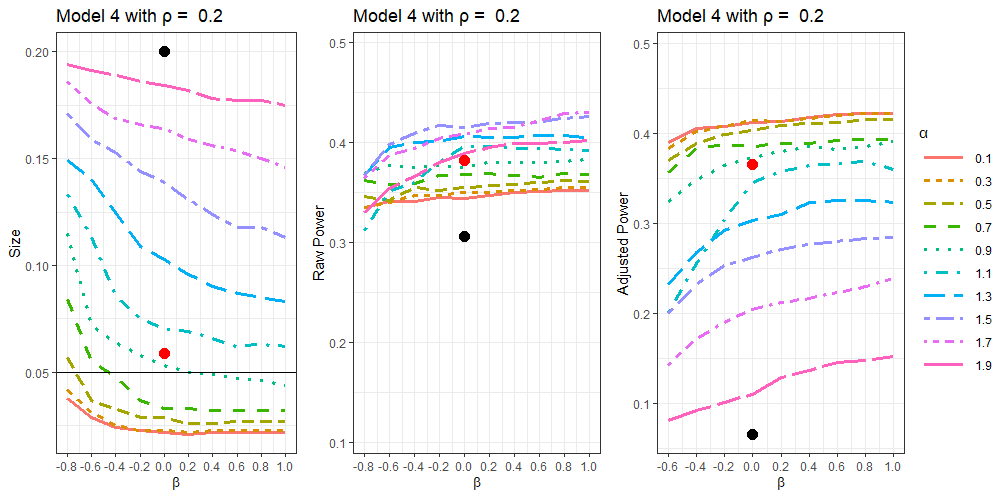}
    \includegraphics[width = 13.9cm,height = 4.6cm]{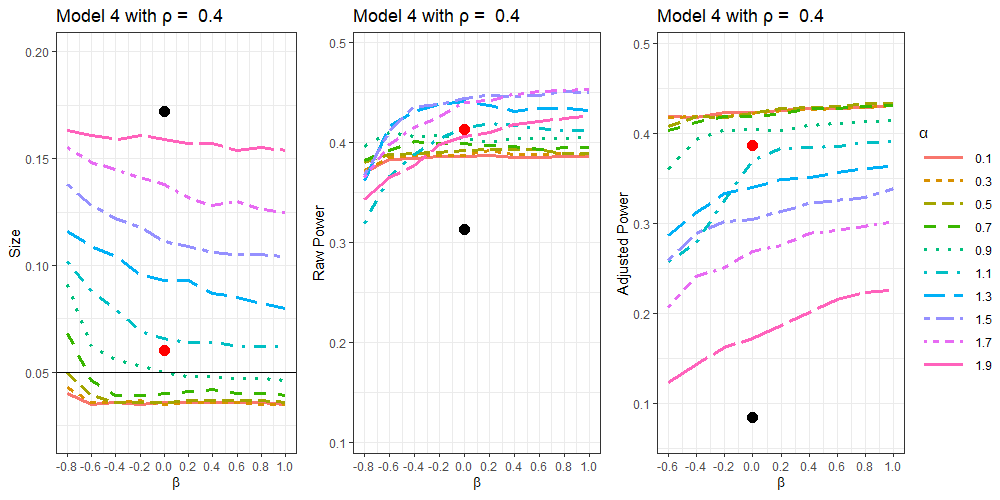}
\includegraphics[width = 13.9cm,height = 4.6cm]{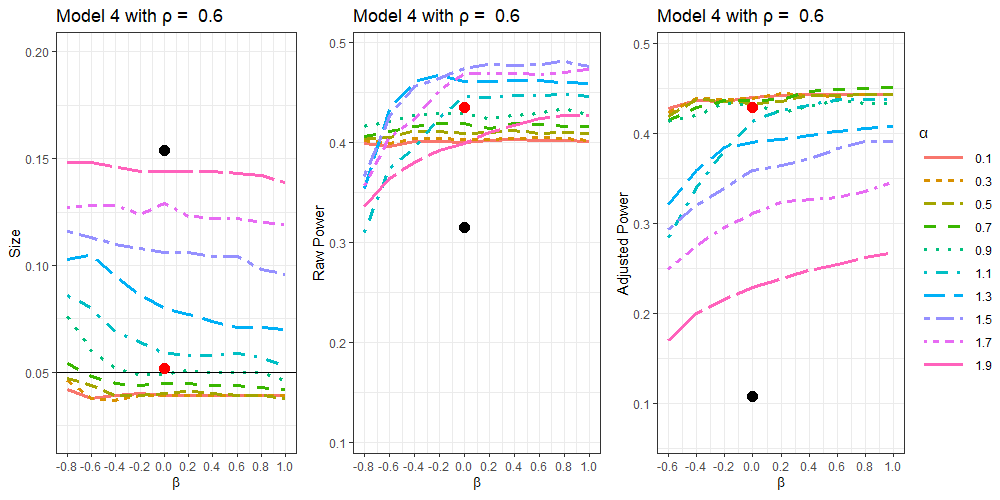}
\includegraphics[width = 13.9cm,height = 4.6cm]{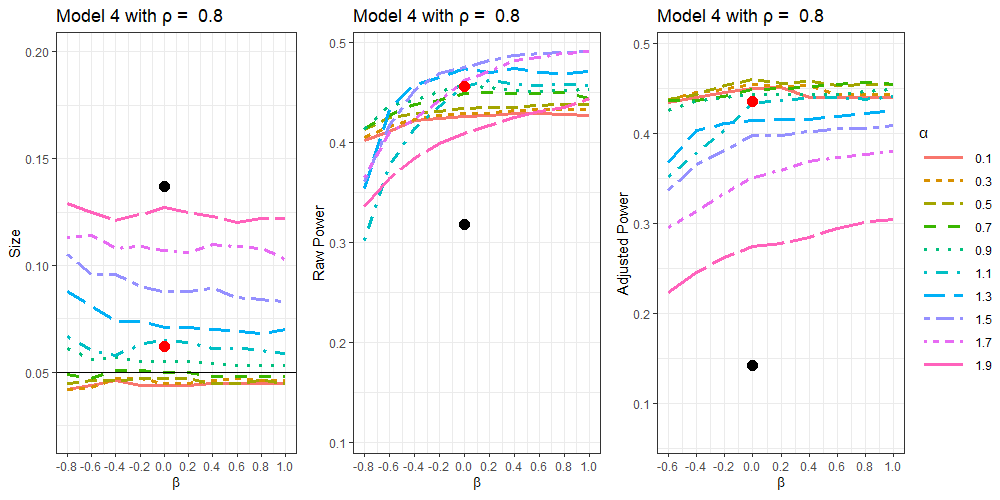}
    \caption{Sizes, raw powers and size-adjusted powers of Model 4 where tests are correlated as polynomial decayed correlation structures with different $\rho$s. Lines indicate the SCT with different $\alpha$s and $\beta$s. Red and black dots represent the CCT (SCT with $\alpha=1$ and $\beta=0$) and Stouffer's Z-score, respectively.  }
    \label{figure: Poly alpha>2}
\end{figure}

\section{Conclusion}\label{section: conclusion}

In this paper, we formulated an additive combination test based on stable distributions. The individual p-values are first transformed into a stably distributed random variables and then their weighted sum is considered. This weighted sum still has a stable distribution, making it possible to construct a test for the global null hypothesis. This method can be considered as an extension of the Cauchy combination test, which is based only on the Cauchy distributed random variables, because Cauchy distribution is also a stable distribution. Similarly to \cite{liu2020cauchy}'s result, our test is robust to  some forms of dependencies among individual p-values. We proved that this new test can successfully control the size and has asymptotically optimal power, which is further confirmed in simulations.

\iffalse
[[comment out from here]]
We provide a method for nonasymptotically multiple hypotheses testing by transform the individual p-values into a stable distributed random variables. We proved that the weighted sum of these stable random variables still has a stable distribution, which fact can be adopted to test the global null hypothesis. 

There are several limitations in the current study. First, we have a dependence assumption like condition \ref{assump: indep tail prob} rather than arbitrary dependence structure.  
Second, the individual p-values are require to be uniformly distributed. However, many test is asymptotic. 

Possible future directions are to relax the dependence assumption and provide theoretical works on the choice of function $\phi$, which makes the Kolmogorov's generalized mean has an asymptotically power law behaviour at infinity. 

\fi

\section*{Acknowledgement}
Rho and Ling are partially supported by NSF-CPS grant \#1739422.

\begin{appendices}

\pagebreak
\renewcommand{\theequation}{A.\arabic{equation}}
\setcounter{equation}{0}

\section{Technical Lemmas}

This section presents lemmas for the proof of Theorem \ref{theorem: power}. Recall that $p(x)=2-2\Phi(|x|)$ for all $x\in \mathbb{R}$, as defined in the beginning of Section \ref{section: power}. Lemmas \ref{lemma: large x} and \ref{lemma: small x} help find the lower bound of $F^{-1}(1-\min_{i\in S} p_i)$ and $F^{-1}(1-\max_{i\in S} p_i)$, respectively. Lemma \ref{lemma:an} presents a lower bound for $a_{n;\alpha}$.

\begin{lemma} \label{lemma: large x}
Define $g(x) = c_{\alpha, \beta}x^{1/\alpha} e^{\frac{x^2}{2\alpha}} $
with constant $c_{\alpha, \beta} = \left[\frac{1+\beta}{\sqrt{2\pi}}\Gamma(\alpha) \sin\left(\frac{\pi\alpha}{2}\right) \right]^{1/\alpha}, $ 
%(\sqrt{\frac{\pi}{2}}b_1)^{1/\alpha} 
where $0<\alpha<2$ and $-1<\beta\le 1$. For $x\to\infty$, %For large enough $x$ with order $O(\sqrt{log(n)})$, 
\begin{equation*}
    F^{-1}[1-p(x)|\alpha, \beta] > g(x) = c_{\alpha, \beta}x^{1/\alpha} e^{\frac{x^2}{2\alpha}} .
\end{equation*}
\end{lemma}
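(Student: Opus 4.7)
The strategy is to convert the quantile inequality into a comparison of tail probabilities and then apply sharp asymptotics on both sides. Since $F(\cdot|\alpha,\beta)$ is continuous and strictly increasing on the support of a stable distribution with $0<\alpha<2$, the claim $F^{-1}[1-p(x)|\alpha,\beta]>g(x)$ is equivalent to the tail comparison $\Pr(W_{0;\alpha,\beta}>g(x))>p(x)$, where $W_{0;\alpha,\beta}\sim\bm{S}(\alpha,\beta)$. The remainder of the proof is to verify this inequality for all sufficiently large $x$.

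For the stable side, I would use the right-tail asymptotic $\Pr(W_{0;\alpha,\beta}>y)\sim c_\alpha(1+\beta)y^{-\alpha}$ already quoted in the simulation discussion, with $c_\alpha=\sin(\pi\alpha/2)\Gamma(\alpha)/\pi$. Substituting $y=g(x)=c_{\alpha,\beta}x^{1/\alpha}e^{x^2/(2\alpha)}$, a direct calculation shows that the constant $c_\alpha(1+\beta)c_{\alpha,\beta}^{-\alpha}$ collapses cleanly to $\sqrt{2/\pi}$, yielding $\Pr(W_{0;\alpha,\beta}>g(x))\sim\sqrt{2/\pi}\,e^{-x^2/2}/x$ as $x\to\infty$. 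For the normal side, Mill's ratio gives the strict inequality $1-\Phi(x)<\phi(x)/x$ for all $x>0$, together with the refined expansion $1-\Phi(x)=(\phi(x)/x)\bigl(1-x^{-2}+O(x^{-4})\bigr)$, so that $p(x)=\sqrt{2/\pi}\,e^{-x^2/2}/x\cdot\bigl(1-x^{-2}+O(x^{-4})\bigr)$. Both expressions share the same leading order, but $p(x)$ is strictly smaller by a relative factor of $x^{-2}$, which is precisely the gap I plan to exploit.

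The delicate step is ensuring that the $o(1)$ correction in the stable tail does not erase this $x^{-2}$ gap. For this I would invoke a second-order stable-tail expansion of the form $\Pr(W_{0;\alpha,\beta}>y)=c_\alpha(1+\beta)y^{-\alpha}\bigl[1+O(y^{-\alpha})\bigr]$, valid for all $0<\alpha<2$ and $-1<\beta\le 1$. Because $g(x)$ grows doubly exponentially in $x$, the relative error evaluated at $y=g(x)$ is $O(g(x)^{-\alpha})=O(x^{-1}e^{-x^2/2})$, which is exponentially small and thus negligible compared with $x^{-2}$. Subtracting the two expansions gives $\Pr(W_{0;\alpha,\beta}>g(x))-p(x)=\sqrt{2/\pi}\,e^{-x^2/2}/x\cdot\bigl[x^{-2}+O(x^{-4})+O(x^{-1}e^{-x^2/2})\bigr]$, which is strictly positive for all sufficiently large $x$, and this closes the proof. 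The main obstacle is simply making the two asymptotic rates (polynomial for Mill's ratio, exponential for the stable correction) interact in the right direction; once the second-order stable estimate is on hand, the comparison is essentially bookkeeping.
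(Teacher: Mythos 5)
Your proposal follows the same overall route as the paper's proof: pass from the quantile inequality to the tail comparison $\Pr(W_{0;\alpha,\beta}>g(x))>p(x)$, evaluate the stable right tail at $g(x)$ via the power-law asymptotic (the constants indeed collapse to $\sqrt{2/\pi}$), and bound $p(x)$ by Mill's ratio. The difference is that you noticed, and repaired, a logical gap that the paper's own proof leaves open: the paper combines the asymptotic equivalence $1-F[g(x)|\alpha,\beta]\sim\sqrt{2/\pi}\,x^{-1}e^{-x^2/2}$ with the one-sided bound $p(x)\le\sqrt{2/\pi}\,x^{-1}e^{-x^2/2}$ and concludes $p(x)\le 1-F[g(x)|\alpha,\beta]$, but an equivalence up to a $1+o(1)$ factor does not by itself dominate a quantity that is merely $\le$ the common leading term --- the unsigned $o(1)$ could in principle swallow the slack in Mill's inequality. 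Your second-order bookkeeping closes this: the Mill's-ratio slack is of relative order $x^{-2}$ (polynomial), while the correction to the first-order stable tail at $y=g(x)$ is $O(g(x)^{-\alpha})=O(x^{-1}e^{-x^2/2})$ (exponentially small), so the difference is eventually positive. The one ingredient you use that the paper does not quote is the second-order tail expansion $\Pr(W_{0;\alpha,\beta}>y)=c_\alpha(1+\beta)y^{-\alpha}\bigl[1+O(y^{-\alpha})\bigr]$; this is a standard consequence of the Bergstr\"om asymptotic series for stable laws (valid for $0<\alpha<2$ and $-1<\beta\le1$, where the right tail is genuinely Paretian), but you should cite it explicitly rather than the first-order statement alone, since the entire argument now hinges on the rate of that correction. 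In short: same strategy, but your version is the rigorous one; what it buys is a proof that actually establishes the strict inequality for large $x$, at the cost of invoking a slightly stronger (though classical) tail expansion.
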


\begin{proof}[Proof of Lemma \ref{lemma: large x}]

When $x\to\infty$, $g(x)\to\infty$. Therefore, we can apply the right tail approximation of a stable distribution in Theorem 1.2 form \cite{nolan2020modeling}. When $0<\alpha<2$ and $-1< \beta\le1$, 
\begin{equation*}
\begin{split}
     1- F[g(x)|\alpha, \beta] &=\Pr[W_{0;\alpha, \beta}>g(x)]\\
     & \sim \frac{1+\beta}{\pi}\Gamma(\alpha)\sin\left(\frac{\pi\alpha}{2}\right) [g(x)]^{-\alpha}\\
     &=\sqrt{\frac{2}{\pi}}x^{-1}e^{-x^2/2}.
\end{split}
\end{equation*}
From Mill's ratio inequality that $1-\Phi(x)\le \frac{\phi(x)}{x}$ for any $x>0$, where $\Phi(\cdot)$ and $\phi(\cdot)$ represent the distribution function and probability density function of a standard normal random variable respectively, we have
\begin{equation*}
    \begin{split}
       p(x) & = 2[1-\Phi(x)] \\
       &\le 2\frac{\phi(x)}{x}\\
       & =\sqrt{\frac{2}{\pi}} x^{-1}e^{-x^2/2}.\\
    \end{split}
\end{equation*}
Therefore, $p(x) \le 1- F[g(x)|\alpha, \beta]$ for $x\to\infty$. Since $F^{-1}$ is increasing,  $F^{-1}[1-p(x)]>g(x)$ for large enough $x$. 

\end{proof}

\begin{lemma}\label{lemma: small x}

Define $\tilde{g}(x) = -\tilde{c}_{\alpha, \beta}x^{-1/\alpha}e^{\frac{x^2}{2\alpha}}$ with constant $\tilde{c}_{\alpha, \beta} = \left[\frac{1-\beta}{\sqrt{2\pi}}\Gamma(\alpha) \sin\left(\frac{\pi\alpha}{2}\right) \right]^{1/\alpha}  $ 
where $0<\alpha<2$ and $-1\le \beta\le  1$. When $x\to0^{+}$,
\begin{equation*}
    F^{-1}[1-p(x)|\alpha, \beta] > \tilde{g}(x) = - \tilde{c}_{\alpha, \beta}x^{-1/\alpha} e^{\frac{x^2}{2\alpha}}.
\end{equation*}

\end{lemma}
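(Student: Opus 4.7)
The plan is to mirror the proof of Lemma \ref{lemma: large x} with sign reversals, pairing the left-tail asymptotic of the stable distribution against a sharp elementary lower bound on $2\Phi(x)-1$ near the origin.

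First I would note that $\tilde g(x)\to -\infty$ as $x\to 0^{+}$, which puts us in the regime where Nolan's left-tail asymptotic applies (Theorem 1.2 of \cite{nolan2020modeling}, in its symmetric form valid for $-1\le \beta<1$):
\[
F[\tilde g(x)\mid\alpha,\beta]=\Pr\!\bigl(W_{0;\alpha,\beta}<\tilde g(x)\bigr)\sim \frac{1-\beta}{\pi}\Gamma(\alpha)\sin\!\Bigl(\frac{\pi\alpha}{2}\Bigr)\,|\tilde g(x)|^{-\alpha}.
\]
Substituting $\tilde g(x)=-\tilde c_{\alpha,\beta}x^{-1/\alpha}e^{x^{2}/(2\alpha)}$ and using the definition of $\tilde c_{\alpha,\beta}$ makes the constants cancel in exactly the same way as in Lemma \ref{lemma: large x}, leaving
\[
F[\tilde g(x)\mid\alpha,\beta]\sim \sqrt{\tfrac{2}{\pi}}\,x\,e^{-x^{2}/2}.
\]

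For the Gaussian side, I would write $1-p(x)=2\Phi(x)-1=2\int_{0}^{x}\phi(t)\,dt$ and use the strict monotonicity of $\phi$ on $[0,\infty)$ to deduce
\[
1-p(x)>2x\phi(x)=\sqrt{\tfrac{2}{\pi}}\,x\,e^{-x^{2}/2}
\]
for every $x>0$. Combining the two displays then gives $F[\tilde g(x)|\alpha,\beta]<1-p(x)$ for all $x$ sufficiently close to $0^{+}$, and applying the strictly increasing $F^{-1}(\cdot|\alpha,\beta)$ delivers $\tilde g(x)<F^{-1}[1-p(x)|\alpha,\beta]$ as claimed. The boundary case $\beta=1$ makes $\tilde c_{\alpha,\beta}=0$ and $\tilde g(x)\equiv 0$, in which case the one-sidedness of the stable law with $\beta=1$ (support on a half-line for $\alpha<1$, maximal right skewness otherwise) makes the inequality either automatic or a short separate check.

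The main obstacle, as in Lemma \ref{lemma: large x}, is reconciling the strict finite-$x$ inequality on the Gaussian side with the asymptotic $\sim$-relation on the stable side whose leading coefficient has been tuned by the very choice of $\tilde c_{\alpha,\beta}$ to match exactly. A Taylor expansion shows that the Gaussian side leaves a positive cubic gap, $1-p(x)-\sqrt{2/\pi}\,x\,e^{-x^{2}/2}=\sqrt{2/\pi}\,x^{3}/3+O(x^{5})$, so the strict inequality ultimately rests on the subleading correction in Nolan's left-tail expansion having the right sign or order so as not to erase that cubic gap. This is exactly the subtlety that is implicitly glossed over in the proof of Lemma \ref{lemma: large x}, and I would dispose of it in the same fashion here, with no genuinely new ideas required.
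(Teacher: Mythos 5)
Your main line of argument coincides with the paper's: apply the left-tail power law to get $F[\tilde g(x)|\alpha,\beta]\sim\sqrt{2/\pi}\,x e^{-x^2/2}$, bound $1-p(x)=2\Phi(x)-1$ strictly above the same quantity, and finish with monotonicity of $F^{-1}$. Your Gaussian step (the rectangle bound $2\int_0^x\phi(t)\,dt>2x\phi(x)$ from the decreasing density) is an equivalent substitute for the paper's series identity $2\Phi(x)-1=\sqrt{2/\pi}\,xe^{-x^2/2}+Q(x)$ with $Q(x)>0$. Your observation that a first-order $\sim$ relation whose constant has been tuned to match exactly cannot by itself deliver a strict inequality, and that one must check the sign and order of the subleading correction against the $O(x^3)$ Gaussian cushion, is well taken; the paper's proof has the same soft spot here and in Lemma \ref{lemma: large x}, so this is a shared weakness rather than a defect of your proposal alone.

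The genuine gap is your disposal of $\beta=1$. Setting $\tilde c_{\alpha,1}=0$ turns the claim into $F^{-1}[1-p(x)|\alpha,1]>0$, and you assert this is ``automatic or a short check'' by one-sidedness. That is only true for $\alpha<1$, where $\bm{S}(\alpha,1)$ is supported on a half-line bounded below. For $1\le\alpha<2$ the totally right-skewed stable law still has support on all of $\mathbb{R}$ (its left tail is lighter than any power law but not bounded), so $F^{-1}(q|\alpha,1)\to-\infty$ as $q\to0^+$ and the inequality $F^{-1}[1-p(x)|\alpha,1]>0$ fails for all small $x$; the separate check you defer would come back negative. The paper avoids this by never letting the constant degenerate: it keeps $\tilde g$ with the tail constant of some $\beta<1$ and uses the stochastic-domination inequality $F[\tilde g(x)|\alpha,1]\le F[\tilde g(x)|\alpha,\beta]$, concluding the weaker bound $F^{-1}[1-p(x)|\alpha,1]>\tilde g(x)$ with a nonzero $\tilde c_{\alpha,\beta}$, which is all that the proof of Theorem \ref{theorem: power} needs. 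You should replace your $\beta=1$ step with that domination argument; as written, your proposal leaves the case $\alpha\ge1$, $\beta=1$ unproved (and indeed proves a false statement).
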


\begin{proof}[Proof of Lemma \ref{lemma: small x}]
We first consider the case where $-1\le\beta<1$.
Similarly to the proof of Lemma \ref{lemma: large x}, when $x\to0^+$, $\tilde{g}(x)\to -\infty$, and thus we can apply the left tail approximation from Theorem 1.2 of \cite{nolan2020modeling} when $0<\alpha<2$ and $-1\le \beta<1$: 
\begin{equation*}
\begin{split}
     F[\tilde{g}(x)|\alpha, \beta] &=\Pr[W_{0;\alpha, \beta}<\tilde{g}(x)]\\
     & \sim \frac{1-\beta}{\pi}\Gamma(\alpha)\sin\left(\frac{\pi\alpha}{2}\right) [-\tilde{g}(x)]^{-\alpha}\\
     &=\sqrt{\frac{2}{\pi}}x e^{-x^2/2}.
\end{split}
\end{equation*}
The standard normal distribution function, $\Phi(x)$, can be rewritten with integration by parts, 
\begin{equation*}
    \begin{split}
        1-p(x) &= 2\Phi(x) - 1 \\
        & = \sqrt{\frac{2}{\pi}}x e^{-x^2/2}  + Q (x),
    \end{split}
\end{equation*}
where $Q (x) = \sqrt{\frac{2}{\pi}} e^{-x^2/2} (\frac{x^3}{3}+\frac{x^5}{3*5} + \cdots)>0$ if $x>0$.
Therefore, $ 1-p(x) > F[\tilde{g}(x)|\alpha, \beta]$ for $x\to 0^+$.  

When $\beta=1$, the distribution is totally skewed to the right, and the left tail probability does not follow a power law. Instead, we know that the left tail probability of $W_{0;\alpha, 1}$ is smaller than that of $W_{0;\alpha, \beta}$ with $-1\le \beta<1$. That is,
$$F[\tilde{g}(x)|\alpha, 1] =\Pr[W_{0;\alpha, 1}<\tilde{g}(x)] \le \Pr[W_{0;\alpha, \beta}<\tilde{g}(x)]. $$
Therefore, 
$ 1-p(x) > F[\tilde{g}(x)|\alpha, \beta] > F[\tilde{g}(x)|\alpha, 1]$ for all $-1\le \beta \le 1$. Since $F^{-1}$ is increasing, we have  $F^{-1}[1-p(x)]>\tilde{g}(x)$, which completes the proof.
\end{proof}

\begin{lemma}\label{lemma:an} Let $w_i\in(0, 1)$ be nonnegative weights such that $\sum_{i=1}^n w_i =1$. The normalizing constant $a_{n;\alpha} = \left(\sum_{i=1}^n w_i^\alpha\right)^{-1/\alpha}\ge \min\{n^{1-1/\alpha}, 1\}$.
\end{lemma}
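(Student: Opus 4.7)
The plan is to bound $\sum_{i=1}^n w_i^\alpha$ from above in two cases separated by whether $\alpha \ge 1$ or $\alpha < 1$, which correspond exactly to whether the $\min$ on the right side equals $1$ or $n^{1-1/\alpha}$. Observe that $n^{1-1/\alpha} \ge 1$ iff $\alpha \ge 1$, so it suffices to show $a_{n;\alpha} \ge 1$ when $\alpha \ge 1$ and $a_{n;\alpha} \ge n^{1-1/\alpha}$ when $0 < \alpha < 1$.

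For the case $\alpha \ge 1$, I would exploit the fact that each $w_i$ lies in $(0,1)$, so the map $t \mapsto t^\alpha$ on $[0,1]$ satisfies $t^\alpha \le t$ whenever $\alpha \ge 1$. Applying this coordinatewise and summing gives $\sum_{i=1}^n w_i^\alpha \le \sum_{i=1}^n w_i = 1$, which after raising to the power $-1/\alpha$ yields $a_{n;\alpha} \ge 1$, matching $\min\{n^{1-1/\alpha}, 1\} = 1$ in this regime.

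For the case $0 < \alpha < 1$, the bound $t^\alpha \le t$ fails, so I would instead invoke the concavity of $t \mapsto t^\alpha$ on $[0, \infty)$ together with Jensen's inequality applied to the uniform distribution on the indices, namely
\begin{equation*}
\frac{1}{n} \sum_{i=1}^n w_i^\alpha \le \left( \frac{1}{n} \sum_{i=1}^n w_i \right)^{\!\alpha} = n^{-\alpha}.
\end{equation*}
Multiplying by $n$ gives $\sum_{i=1}^n w_i^\alpha \le n^{1-\alpha}$, and raising to the $-1/\alpha$ power produces $a_{n;\alpha} \ge n^{-(1-\alpha)/\alpha} = n^{1-1/\alpha}$, again matching the claimed minimum since $n^{1-1/\alpha} \le 1$ in this regime.

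There is no real obstacle here; the only thing to be careful about is which of the two bounds is active as a function of $\alpha$, and verifying that the case split lines up cleanly with the $\min$ on the right-hand side. Both halves follow from a single line of convexity/concavity reasoning applied to $t \mapsto t^\alpha$, so the argument amounts to checking the two regimes and taking the $-1/\alpha$ power carefully (noting that $-1/\alpha < 0$ reverses the direction of the inequality).
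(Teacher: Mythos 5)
Your proof is correct and follows essentially the same route as the paper's: the paper handles $0<\alpha<1$ via H\"older's inequality (equivalent to your Jensen/concavity bound $\sum_i w_i^\alpha \le n^{1-\alpha}$) and $\alpha\ge 1$ via the comparison of the $\ell_\alpha$ and $\ell_1$ norms (equivalent to your pointwise bound $t^\alpha\le t$ on $[0,1]$). The only cosmetic difference is that the paper treats $\alpha=1$ as a separate trivial case, whereas you fold it into the $\alpha\ge 1$ regime.
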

\begin{proof}[Proof of Lemma \ref{lemma:an}]
The lower bound of $a_{n;\alpha}$ is considered in three separate cases. 
First, when $\alpha=1$, $a_{n;\alpha}=1$. The second case is when $0< \alpha< 1$. By Hölder's inequality, 
$$ \sum_{i=1}^n w_i^\alpha \le \left[\sum_{i=1}^n\left( w_i^\alpha\right)^{1/\alpha}\right]^\alpha n^{1-\alpha} = n^{1-\alpha},$$
which is equivalent to 
$ a_{n;\alpha} \ge n^{1-1/\alpha}.$
The last case is when $1<\alpha<2$. From the fact that $l_\alpha$ norm is smaller $l_1$ norm, 
$$\left(\sum_{i=1}^n |w_i|^\alpha \right)^{1/\alpha}\le \sum_{i=1}^n |w_i| = 1,$$
and therefore, $a_{n;\alpha}\ge 1 .$
Combining the above three cases, we have $a_{n;\alpha} = \left(\sum_{i=1}^n w_i^\alpha\right)^{-1/\alpha}\ge \min\{n^{1-1/\alpha}, 1\}$. 

\end{proof}

\section{Proof of Theorem \ref{theorem: power}}\label{append: proof on power}
\begin{proof}[Proof of Theorem \ref{theorem: power}]
Recall that the test statistic is defined as $T_{n;\alpha,\beta}(\bm{p}) = T_{n;\alpha,\beta}(\bm{X}) = a_{n;\alpha} \sum_{i=1}^{n}w_i F^{-1}[1-p(X_i)|\alpha, \beta]$, where  $a_{n;\alpha} =  \left(\sum_{j = 1}^n w_j^\alpha\right)^{-1/\alpha} $. 
Under Assumption \ref{assump: alternative}, 
the test statistic $T_{n;\alpha,\beta}(\bm{X})$ can be decomposed into two parts:
$$T_{n;\alpha,\beta}(\bm{X}) = a_{n;\alpha} \sum_{i\in S}w_i F^{-1}[1-p(X_i)|\alpha, \beta] +a_{n;\alpha} \sum_{i\in S^c}w_i F^{-1}[1-p(X_i)|\alpha, \beta]:= A_n+B_n. $$
In order to show $T_{n;\alpha,\beta}(\bm{X}) \to \infty$ as $n\to\infty$, we will show that $A_n\to\infty$ with probability 1 and that $B_n$ cannot be arbitrary large negative. 

Part $A_n$ can be further decomposed as follows: 
\begin{equation*}
    \begin{split}
        A_n&\ge a_{n;\alpha}c_0 n^{-1} \max_{i\in S} F^{-1}[1-p(X_i)|\alpha, \beta] + a_{n;\alpha} (\sum_{j \in S} w_j -c_0n^{-1})\min_{i\in S} F^{-1}[1-p(X_i)|\alpha, \beta] \\
        & := A_{n,1}+A_{n,2}
    \end{split}
\end{equation*}
In the following arguments, we will prove that $A_n\to\infty$ with probability 1 by showing that $A_{n,1}$ can be arbitrarily large whereas $A_{n,2}>o_p(1)$ as $n\to\infty$.

Since $F^{-1}[1-p(x)|\alpha,\beta]$ is increasing in $x$, $A_{n, 1} = a_{n;\alpha}  c_0 n^{-1}   F^{-1}[1-p(\max_{i\in S}|X_i|)|\alpha, \beta]$.
Recall that the set of positive signals ($S_+$) is assumed to have cardinality no less than $|S|/2$. From Lemma 6 of \cite{cai2014lemma6} and using the same argument as in the proof of Theorem 3 of \cite{liu2020cauchy},
$\max_{i\in S}|X_i|\ge \mu_0 + \sqrt{2\log|S_{+}|} + o_p(1)$. Given the assumptions $\mu_0 = \sqrt{2r\log n}$ and $\sqrt{2\log|S_{+}|} \ge \sqrt{2(\gamma \log n -\log2)}$, we have  $ \max_{i\in S}(|X_i|)\to \infty$ with probability 1. 
Lemma \ref{lemma: large x} implies that,  as $n\to\infty$, 
\begin{equation*}
        \Pr\left\{ F^{-1}\left[1-p\left(\max_{i\in S}|X_i|\right)|\alpha, \beta\right] > g\left(\max_{i\in S}|X_i|\right) \right\} \to 1,
\end{equation*}
which is equivalent to
\begin{equation*}
    \Pr\left\{ A_{n,1} \ge  a_{n;\alpha} {c_0}{n^{-1}} c_{\alpha, \beta} \left(\max_{i\in S}|X_i|\right)^{1/\alpha} \exp\left[ \frac{\left(\max_{i\in S}|X_i|\right)^2}{2\alpha}\right]   \right\}\to 1.
\end{equation*}
Noting that $\max_{i\in S}|X_i|\ge \sqrt{2r\log n} + \sqrt{2\log|S_{+}|} + o_p(1)$, $\Pr\{\max_{i\in S}X_i>1\}\to1$  
and $\sqrt{\log|S_{+}|} \ge \sqrt{2(\gamma \log n -\log2)} \approx \sqrt{2\gamma \log n }$, we have
\begin{equation*}
\Pr\left\{ A_{n,1}
        \ge  a_{n;\alpha}  {c_0}{n^{-1}} c_{\alpha, \beta} \left[n^{(\sqrt{\gamma} + \sqrt{r})^2} \right]^{1/\alpha} \right\}\to 1. 
\end{equation*}
From Lemma \ref{lemma:an}, $ a_{n;\alpha} = \left(\sum_{j = 1}^n w_j^\alpha\right)^{-1/\alpha} \ge \min\{ n^{(\alpha-1)/\alpha},1\}$, therefore, as $n\to\infty$,
$$\Pr\left\{ A_{n,1}
        \ge {c_0}c_{\alpha, \beta}  \left[n^{(\sqrt{\gamma} + \sqrt{r})^2/\alpha -1+\min\{1-1/\alpha,0\} } \right] \right\}\to 1.$$
By Pert \ref{assump part: magnitude} of Assumption \ref{assump: alternative}, $\sqrt{r} + \sqrt{\gamma} > \max\{\sqrt{\alpha}, 1\}$, we have $n^{(\sqrt{\gamma} + \sqrt{r})^2/\alpha -1/\alpha-1 } \to\infty$ as $n\to\infty$.
Therefore, we obtain that $A_{n, 1} \to \infty$ with probability tending to 1 as $n\to\infty$.

Next consider the part $A_{n, 2} =  a_{n;\alpha} \left(\sum_{j\in S}w_j-c_0n^{-1}\right)\min_{i\in S} F^{-1}[1-p(X_i)]$.  Suppose $\mu_1 = \mu_0$ without loss of generality, thus $X_1 = \mu_0 + Z_1$, where $Z_1\sim N(0, 1)$.
Let $\epsilon_n = n^{\alpha\gamma_0-1}$ with $\gamma<\gamma_0<\frac{1-\gamma}{\alpha}$.
Similarly to the proof of \cite{liu2020cauchy}, $\min_{i\in S} |X_i|$ is greater than any $\epsilon_n$ with probability 1 as $n\to\infty$ because
\begin{equation}\label{equation: minX}
\begin{split}
     \Pr\Big(\min_{i\in S} |X_i| <\epsilon_n \Big) &\le \sum_{i\in S}\Pr\Big(|X_i| <\epsilon_n \Big)\\
   & = n^{\gamma} \Pr\Big(|X_i| <\epsilon_n \Big)\\
   & = n^{\gamma} \left[ \Phi(\mu_0 + \epsilon_n) -\Phi(\mu_0 - \epsilon_n) \right]\\
   &<  n^{\gamma} \left[ 2\epsilon_n \phi(\mu_0-\varepsilon_n)  \right]\\
   &< n^{\gamma} \epsilon_n = n^{\gamma+ \alpha\gamma_0-1} = o(1). 
\end{split}
\end{equation}
Apply the increasing function $F^{-1}\left[1-p(x)|\alpha, \beta\right]$ on both $\min_{i\in S} |X_i|$ and $\epsilon_n$, 
equation (\ref{equation: minX}) is then equivalent to the statement that $F^{-1}[1-p(\min_{i\in S} |X_i|)|\alpha, \beta]$ is greater than any $ F^{-1}[1-p(\epsilon_n )|\alpha, \beta]$ with probability 1 as $n\to\infty$. 
Since $\epsilon_n \to 0$ as $n\to \infty$, we can apply Lemma \ref{lemma: small x} to find the lower bound of $ F^{-1}[1-p(\epsilon_n )|\alpha, \beta]$, which implies the bound of $A_{n, 2}$ as follows:
$$\Pr\left\{ |A_{n,2}| >a_{n;\alpha} (n^{\gamma-1}-c_0n^{-1}) |\tilde{g}(\epsilon_n )| \right\}\to 1.$$
With the assumption that there is a constant $c_0$ such that $\min_{i=1}^n w_i \ge c_0/n$, we have $a_n< n^{1-1/\alpha}c_0^{-1}$. As $n\to\infty$, $e^{\epsilon_n^2/(2\alpha)} \to 1$, and thus 
\begin{equation*}
    \begin{split}
      a_n (n^{\gamma-1} -c_0n^{-1})|\tilde{g}(\varepsilon_n)|
      &< 
        \tilde{c}_{\alpha, \beta} a_n n^{\gamma-1}   \varepsilon_n^{-1/\alpha}e^{\epsilon_n^2/(2\alpha)} \\
       & \le \tilde{c}_{\alpha, \beta} c_0^{-1}  n^{\gamma -\gamma_0 } \\
       & =o(1).
    \end{split}
\end{equation*}
Therefore, $A_{n,2}>o_p(1)$, which completes the proof of the statement that $A_n \to\infty$ with probability 1 as $n\to\infty$.

Next, we show $B_n$ cannot be arbitrary large negative. Under Part \ref{assump part: S complementary} of Assumption \ref{assump: alternative}, Theorem \ref{theorem: size} implies that as $n\to\infty$, $$\left(\frac{\sum_{j=1}^n w_j^\alpha}{\sum_{k\in S^c}w_k^\alpha}\right)^{1/\alpha}B_n = \left(\frac{\sum_{j=1}^n w_j^\alpha}{\sum_{k\in S^c}w_k^\alpha}\right)^{1/\alpha}a_{n;\alpha}\sum_{i\in S^c}w_i F^{-1}\left(1-p_i|\alpha, \beta\right) \overset{d}{\to} W_{0;\alpha, \beta},$$ 
where $W_{0;\alpha, \beta}$ follows $\bm{S}(\alpha, \beta)$.

Let $\delta_{\epsilon_n} = \left[\frac{1 - \beta}{\pi\epsilon_n}\Gamma(\alpha)\sin\left(\frac{\pi\alpha}{2}\right) \left(\frac{\sum_{k\in S^c}w_k^\alpha }{ \sum_{j=1}^nw_j^\alpha}\right) \right]^{1/\alpha}$, where $\epsilon_n = n^{\alpha\gamma_0-1}$ with $\gamma<\gamma_0<\frac{1-\gamma}{\alpha}$. Notice that as $n\to\infty$, $\epsilon_n\to 0$, and $\delta_{\epsilon_n}\left(\frac{\sum_{j=1}^nw_j^\alpha }{\sum_{k\in S^c}w_k^\alpha}\right)^{1/\alpha} = \left[\frac{1-\beta}{\pi\epsilon_n}\Gamma(\alpha)\sin\left(\frac{\pi\alpha}{2}\right)\right]^{1/\alpha} \to \infty$. We first show $-1<\beta<1$ case. According to the tail approximation of Theorem 1.2 of \cite{nolan2020modeling}, when $0<\alpha<2$ and $-1\le \beta< 1$,
\begin{equation}\label{equation: B_n bounded}
    \begin{split}
        \Pr\left(B_n<- \delta_{\epsilon_n}\right) &\sim  \Pr\left[ W_{0;\alpha, \beta} <- \delta_{\epsilon_n}\left(\frac{\sum_{j=1}^nw_j^\alpha }{\sum_{k\in S^c}w_k^\alpha}\right)^{1/\alpha}\right]\\
        & \sim \frac{1-\beta}{\pi}\Gamma(\alpha)\sin\left(\frac{\pi\alpha}{2}\right) \delta_{\epsilon_n}^{-\alpha} \left(\frac{\sum_{j=1}^nw_j^\alpha }{\sum_{k\in S^c}w_k^\alpha}\right)^{-1}\\
        &=\varepsilon_n,
    \end{split}
\end{equation}
for large enough $n$. 
Equation (\ref{equation: B_n bounded}) implies that for any $\epsilon_n\to 0$, there exist an $\delta>\delta_{\epsilon_n}$ such that $\Pr\left(B_n<- \delta\right)< \epsilon_n$ as $n\to\infty$. 

When $\beta=1$, the distribution is totally skewed to the right, and consequently, for all $i \in S^c$, $\Pr\left(W_{i;\alpha, 1} < -\delta_{\epsilon_n}\right) < \Pr\left(W_{i;\alpha, \beta} < -\delta_{\epsilon_n}\right) $ for any $\beta<1$. Therefore, equation (\ref{equation: B_n bounded}) holds for $\beta=1$ as well. That is, $B_n$ cannot be arbitrary large negative for all $0<\alpha<2$ and $-1<\beta\le1$, which finishes the proof.
\end{proof}

\end{appendices}

\bibliographystyle{chicago}
\bibliography{ref}

\end{document}